\setlist{nolistsep}
\newcommand{\defeq}{\vcentcolon=}
\newcommand{\norm}[1]{\left\lVert#1\right\rVert}
\newcommand{\ignore}[1]{}
\newcommand{\ie}{\emph{i.e., }}
\newcommand{\etc}{\emph{etc}}
\def \model{{\scshape TSP}\xspace}
\newtheorem{theorem}{Theorem}
\newtheorem{definition}{Definition}
\newtheorem*{proof}{Proof}
\title{How Does Topology Bias Distort Message Passing? A Dirichlet Energy Perspective}
\author{Yanbiao Ji\\
Shanghai Jiao Tong University\\
\texttt{jiyanbiao@sjtu.edu.cn}\\
\And
Yue Ding\\
Shanghai Jiao Tong University\\
\texttt{dingyue@sjtu.edu.cn}\\
\And
Dan Luo\\
Lehigh University\\
\texttt{dal417@lehigh.edu}\\
\And
Chang Liu\\
Shanghai Jiao Tong University\\
isonomialiu@sjtu.edu.cn\\
\And
Yuxiang Lu\\
Shanghai Jiao Tong University\\
luyuxiang\_2018@sjtu.edu.cn\\
\And
Xin Xin\\
Shandong University\\
xinxin@sdu.edu.cn\\
\And
Hongtao Lu\\
Shanghai Jiao Tong University\\
htlu@sjtu.edu.cn
}
\begin{document}
\setlength{\abovedisplayskip}{4pt}
\setlength{\belowdisplayskip}{4pt}
\maketitle

\begin{abstract}
Graph-based recommender systems have achieved remarkable effectiveness by modeling high-order interactions between users and items. However, such approaches are significantly undermined by popularity bias, which distorts the interaction graph's structure—referred to as \emph{topology bias}. This leads to overrepresentation of popular items, thereby reinforcing biases and fairness issues through the user-system feedback loop. Despite attempts to study this effect, most prior work focuses on the embedding or gradient level bias, overlooking how topology bias fundamentally distorts the message passing process itself. We bridge this gap by providing an empirical and theoretical analysis from a Dirichlet energy perspective, revealing that graph message passing inherently amplifies topology bias and consistently benefits highly connected nodes. To address these limitations, we propose \textbf{Test-time Simplicial Propagation}~(\model), which extends message passing to higher-order simplicial complexes. By incorporating richer structures beyond pairwise connections, \model mitigates harmful topology bias and substantially improves the representation and recommendation of long-tail items during inference. Extensive experiments across five real-world datasets demonstrate the superiority of our approach in mitigating topology bias and enhancing recommendation quality. 
\end{abstract}


\section{Introduction}
Recommender systems~(RS) are fundamental components of modern online platforms, playing a crucial role in connecting users with relevant items~\cite{intro1, intro2, intro3}. 
Recently, graph-based methods, such as LightGCN~\cite{lgn}, have gained prominence in RS, as they effectively capture collaborative filtering signals by modeling high-order connectivity between users and items~\cite{gr1, gr2, gr3, gr4}.
Despite their effectiveness, graph-based methods suffer from popularity bias, where popular items are recommended disproportionately more often than less popular items~\cite{pop1, pop2}.
This notorious effect not only undermines the accuracy and fairness of recommendation~\cite{div, tail1, tail2}, but also exacerbates the Matthew
Effect and the filter bubble~\cite{bubble1, bubble2} through the user-system feedback loop.

In graph-based recommender systems, popularity bias mainly distorts the topology of the interaction graph, producing a highly biased degree distribution as shown in Figure~\ref{fig:bias}(a), which we refer to as \textit{topology bias}. While research has shown that the topology bias gets further amplified~\cite{macr, amp} in graph-based methods, the root cause of this effect remains rather unexplored. Although some recent studies have attempted to understand this issue, they exhibit significant limitations. Their analyses focus mainly on the embedding landscape~\cite{emb1, emb2} or gradient bias~\cite{amplify, grad2, grad3} \textit{after message passing}. The question of how topology bias distorts the message passing process remains unsolved.

\begin{wrapfigure}[11]{r}{0.5\textwidth}
    \centering
    \includegraphics[width=0.5\textwidth]{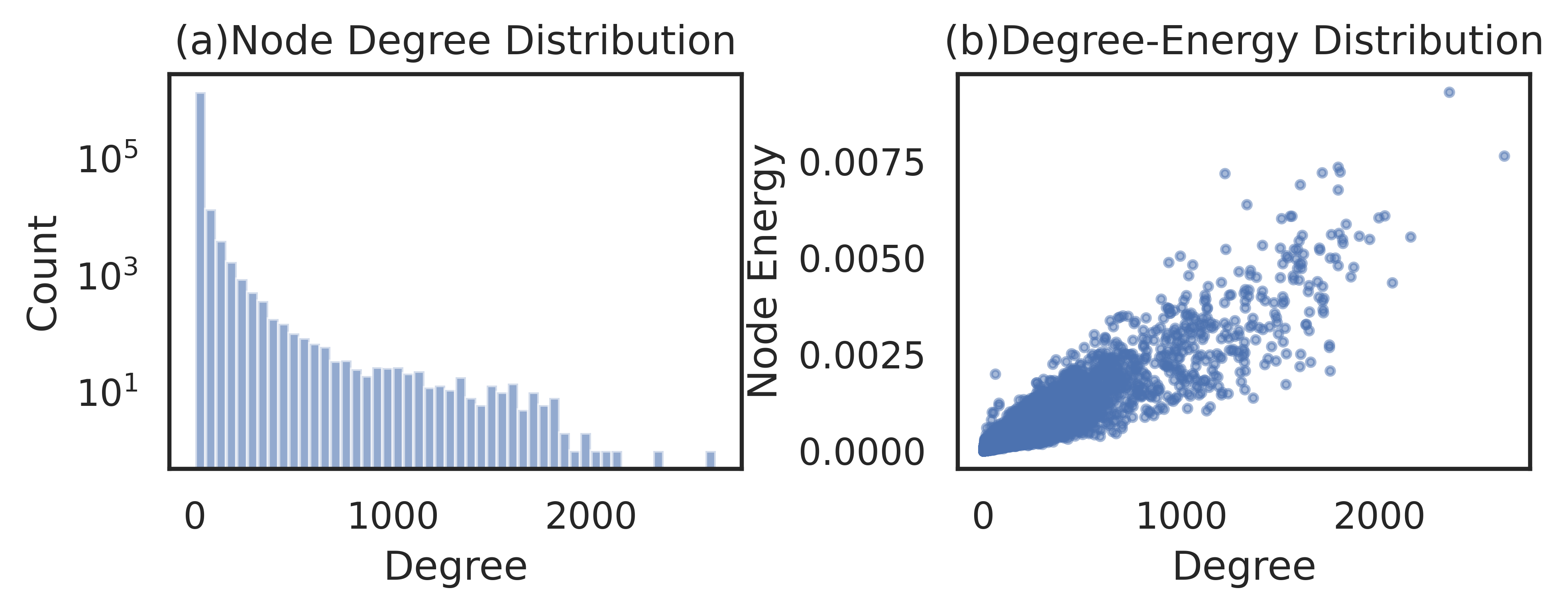}
    \caption{The degree and energy distribution of LightGCN embeddings on Gowalla dataset.}
    \label{fig:bias}
\end{wrapfigure}

To bridge this research gap, we investigate the message passing process in graph-based RS from the Dirichlet energy perspective. Our empirical and theoretical analysis reveals the following results:
(1) The graph message passing mechanism is equivalent to minimizing the global graph Dirichlet energy. 
(2) Popular nodes with high degrees concentrate embedding norms.
(3) The update of the norm of a node's embedding is upper bounded by local Dirichlet energy.
Based on the above studies,
we propose \textbf{Test-time Simplicial Propagation}~(\model), which performs message passing over simplicial complexes with intra- and inter-simplex smoothing to avoid the dominance of certain nodes or substructures. By incorporating simplices instead of mere edges, our proposed \model mitigates harmful topology bias during inference, improving the performance of tail nodes without requiring RS retraining.

We summarize our contributions below:
\begin{itemize}[leftmargin=*]
    \item We provide comprehensive theoretical analysis of how topology bias distorts message passing in graph-based recommender systems. We establish formal connections between graph message passing, Dirichlet energy optimization, and node embeddings, offering a mathematical framework to understand why popular nodes are systematically favored.
    
    \item We propose a principled approach to overcome the limitations of pairwise edge message passing by introducing \model, which leverages higher-order simplicial complexes for information propagation, enabling more balanced representation learning across nodes of varying popularity.
    
    \item We develop \model as a plug-and-play solution that operates solely during inference, making it compatible with existing recommender systems without requiring costly retraining.
    
    \item Through extensive experiments across five diverse real-world datasets, we demonstrate that \model can effectively mitigate topology bias without impairing the overall recommendation quality. Our results show consistent performance gains in tail nodes as well as competitive or better results in overall recommendation compared to state-of-the-art baselines.
\end{itemize}
\section{Preliminaries}
\label{sec:preliminaries}


\subsection{Graph-based Recommendation}
Let $\mathcal{U} = \{u_1, u_2, \dots, u_{|\mathcal{U}|}\}$ denote the set of users and $\mathcal{I} = \{i_1, i_2, \dots, i_{|\mathcal{I}|}\}$ denote the set of items. User-item interactions are represented as an implicit feedback matrix $\mathbf{R} \in \{0, 1\}^{|\mathcal{U}| \times |\mathcal{I}|}$, where $\mathbf{R}_{ui} = 1$ if user $u$ interacted with item $i$, and 0 otherwise. 
These interactions can be naturally modeled as a graph $\mathcal{G}=(\mathcal{V}, \mathcal{E})$, specifically a user-item bipartite graph where the node set $\mathcal{V} = \mathcal{U} \cup \mathcal{I}$ and the edge set $\mathcal{E}$ contains edges $(u, i)$ if and only if $\mathbf{R}_{ui}=1$. Let $\mathbf{A} \in \{0, 1\}^{|\mathcal{V}| \times |\mathcal{V}|}$ be the adjacency matrix of the graph and $\mathbf{D}$ be the diagonal degree matrix where $\mathbf{D}_{ii} = \sum_j \mathbf{A}_{ij}$. The normalized adjacency matrix is defined as $\tilde{\mathbf{A}}=\mathbf{D}^{-\frac{1}{2}} \mathbf{A} \mathbf{D}^{-\frac{1}{2}}$.
In the widely adopted LightGCN \citep{lgn}, the message passing mechanism in layer $l$ can be formally written as:
\begin{equation}
\mathbf{X}^{(l+1)} = \tilde{\mathbf{A}}\mathbf{X}^{(l)},
\label{eq:lightgcn}
\end{equation}
where $\mathbf{X}^{(l)} \in \mathbb{R}^{|\mathcal{V}| \times d}$ is the node embedding matrix at layer $l$, with $d$ representing the embedding dimension. The recommendation score for user $u$ and item $i$ is then computed as the inner product between their respective embeddings: $y_{ui}=\mathbf{x}_u^T\mathbf{x}_i$.

\subsection{Graph Laplacian and Dirichlet Energy}
The graph Laplacian is a fundamental operator that captures the structural properties of a graph. The symmetrically normalized Laplacian \citep{lap,normlap} is defined as:
\begin{equation}
\tilde{\mathbf{L}} = \mathbf{I} - \mathbf{D}^{-\frac{1}{2}} \mathbf{A} \mathbf{D}^{-\frac{1}{2}}.
\label{eq:norm_laplacian}
\end{equation}

Given node embeddings $\mathbf{X} \in \mathbb{R}^{|\mathcal{V}| \times d}$, the Dirichlet energy on the graph measures the smoothness of node embeddings with respect to the graph structure:

\begin{definition}[Graph Dirichlet Energy~\cite{lap}]
\label{def:graph_dirichlet}
The Dirichlet energy of node embeddings $\mathbf{X}$ with respect to graph $\mathcal{G}$ is defined as:
\begin{align} 
E_{\mathcal{G}}(\mathbf{X}) \defeq \mathrm{Tr}(\mathbf{X}^T \tilde{\mathbf{L}} \mathbf{X}) = \frac{1}{2} \sum_{(i,j) \in \mathcal{E}} \norm{ {\mathbf{x}_i}/{\sqrt{d_i}} - {\mathbf{x}_j}/{\sqrt{d_j}}}_2^2,
\end{align}
where $\mathrm{Tr}(\cdot)$ denotes the matrix trace, $\mathbf{x}_i \in \mathbb{R}^d$ is the embedding of node $i$, and $d_i$ is its degree.
\end{definition}
Building on this global measure, we define a localized node Dirichlet energy:

\begin{definition}[Local Dirichlet Energy]
\label{def:node_dirichlet}
The local Dirichlet energy of a node $v$ is defined as:
\begin{align} 
E_v(\mathbf{x}_v) 
\defeq \frac{1}{2} \sum_{j\in\mathcal{N}(v)} \norm{ {\mathbf{x}_v}/{\sqrt{d_v}} - {\mathbf{x}_j}/{\sqrt{d_j}}}_2^2,
\label{eq:node_dirichlet}
\end{align}
where $\mathcal{N}(v)$ denotes the set of neighbors of node $v$.
\end{definition}

\subsection{Simplicial Complexes}
Simplicial complexes~(SC) provide a powerful framework to generalize graphs to higher-order relationships~\cite{scbook}. These structures are combinations of simplices, which we define as follows:

\begin{definition}[$k$-simplex]
\label{def:simplex}
A $k$-simplex $\sigma^k$ is the convex hull of $k+1$ affinely independent points $S=\{v_0, v_1, \dots, v_k\}$, which constitute the vertices of the simplex. The dimension (or order) of the simplex is $k$. A $j$-simplex ($j<k$) formed by a subset of $S$ is called a $j$-face of $\sigma^k$. 
\end{definition}

A $k$-simplex represents connections among $k+1$ nodes: a 0-simplex is a node, a 1-simplex is an edge, a 2-simplex is a triangle, \etc. These simplices are organized into simplicial complexes:

\begin{definition}[Simplicial Complex]
\label{def:sc}
A simplicial complex $\mathcal{X}$ is a finite collection of simplices satisfying the following properties: (1) Every face of a simplex in $\mathcal{X}$ is also in $\mathcal{X}$. (2) The non-empty intersection of any two simplices in $\mathcal{X}$ is a face of both.

The dimension (or order) of the simplicial complex $\mathcal{X}$ is the maximum dimension of any simplex in $\mathcal{X}$.
\end{definition}

The structure of a simplicial complex can be algebraically represented using boundary matrices:

\begin{definition}[Boundary Matrix]
\label{def:boundary_matrix}
The boundary matrix $\mathbf{B}_k$ maps $k$-simplices to $(k-1)$-simplices by encoding the incidence relations between them. Specifically, $\mathbf{B}_k(i, j) = \pm1$ if the $(k-1)$-simplex $\sigma_i^{k-1}$ is a face of the $k$-simplex $\sigma_j^k$, and 0 otherwise. The sign is determined by the orientation of simplices. Detailed discussion of orientations and examples can be found in Appendix~\ref{appendix:scexample}.
\end{definition}

Now we can define the Hodge Laplacians, generalizing the graph Laplacian to higher-order structures:

\begin{definition}[Hodge Laplacian \citep{hodgebook}]
\label{def:hodge_laplacian}
For a $K$-dimensional simplicial complex, the $k$-th Hodge Laplacian $\mathbf{L}_k$ is defined as:
\begin{align}
\mathbf{L}_0 &= \mathbf{B}_1 \mathbf{B}^T_1, \quad\mathbf{L}_K = \mathbf{B}_K^T \mathbf{B}_K,  \label{eq:L0_graph} \\
\mathbf{L}_k &= \mathbf{B}_k^T \mathbf{B}_k + \mathbf{B}_{k+1} \mathbf{B}^T_{k+1} \quad (1 \leq k < K). \label{eq:Lk} 
\end{align}
\end{definition}

\section{Topology Bias from Dirichlet Energy Perspective}

In this section, we provide a theoretical analysis of how topology bias shapes the message passing process in graph-based recommender systems. Our analysis uncovers the mechanism by which popular nodes are systematically favored, a phenomenon rooted in an energy minimization dynamic.

\subsection{Message Passing as Dirichlet Energy Optimization}

We begin by establishing a fundamental equivalence between graph message passing and a Dirichlet energy minimization problem. This equivalence allows us to understand how the underlying graph topology directly governs information propagation.

\begin{restatable}{lemma}{lemmaequiv}
\label{lemma:equiv}
    The message passing mechanism
    is equivalent to optimizing the following energy function with Tikhonov regularization~\cite{tikh} and initial embedding matrix $\mathbf{X}^{(0)}$:
    \begin{equation}
        \min_{\mathbf{X}} \mathcal{J}(\mathbf{X})= \frac{1}{2}\mathrm{Tr}(\mathbf{X}^T\tilde{\mathbf{L}}\mathbf{X})
        +\frac{\mu}{2}\|\mathbf{X}-\mathbf{X}^{(0)}\|_2^2, \quad \mu>0,
        \label{eq:energy_op}
    \end{equation}
    where $\tilde{\mathbf{L}}$ is the normalized graph Laplacian, and $\mu$ is a regularization parameter.
    \vspace{-5pt}
\end{restatable}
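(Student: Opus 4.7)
The plan is to establish the equivalence by exhibiting the LightGCN message passing step $\mathbf{X}^{(l+1)} = \tilde{\mathbf{A}}\mathbf{X}^{(l)}$ as one step of gradient descent on the proposed objective $\mathcal{J}$. I will compute the Euclidean gradient of $\mathcal{J}$, substitute the definition $\tilde{\mathbf{L}} = \mathbf{I} - \tilde{\mathbf{A}}$, choose the step size so that the update collapses to propagation by $\tilde{\mathbf{A}}$, and argue that the Tikhonov regularizer guarantees the optimization problem is well-posed while vanishing in the per-step dynamics.

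Concretely, I would first use the symmetry of $\tilde{\mathbf{L}}$ together with the identity $\nabla_{\mathbf{X}}\mathrm{Tr}(\mathbf{X}^T\tilde{\mathbf{L}}\mathbf{X}) = 2\tilde{\mathbf{L}}\mathbf{X}$ and the elementary derivative of the quadratic anchor term to obtain
\begin{equation*}
\nabla_{\mathbf{X}}\mathcal{J}(\mathbf{X}) \;=\; \tilde{\mathbf{L}}\mathbf{X} \;+\; \mu\bigl(\mathbf{X} - \mathbf{X}^{(0)}\bigr).
\end{equation*}
Then one step of gradient descent with step size $\eta$ starting from $\mathbf{X}^{(l)}$ gives
\begin{equation*}
\mathbf{X}^{(l+1)} \;=\; \mathbf{X}^{(l)} - \eta\bigl[\tilde{\mathbf{L}}\mathbf{X}^{(l)} + \mu(\mathbf{X}^{(l)} - \mathbf{X}^{(0)})\bigr]
\;=\; \bigl[1-\eta(1+\mu)\bigr]\mathbf{X}^{(l)} + \eta\tilde{\mathbf{A}}\mathbf{X}^{(l)} + \eta\mu\mathbf{X}^{(0)},
\end{equation*}
after using $\tilde{\mathbf{L}} = \mathbf{I}-\tilde{\mathbf{A}}$. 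Setting $\eta = 1$ and letting $\mu \to 0^{+}$ kills both the residual term $(1-\eta(1+\mu))\mathbf{X}^{(l)}$ and the anchor $\eta\mu\mathbf{X}^{(0)}$ simultaneously, leaving precisely $\mathbf{X}^{(l+1)} = \tilde{\mathbf{A}}\mathbf{X}^{(l)}$, which is the LightGCN propagation rule. Iterating this for $L$ layers corresponds to $L$ successive gradient steps on $\mathcal{J}$.

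The main subtlety, which I expect to be the key point to articulate carefully rather than a computational obstacle, is in what sense the equivalence holds. The exact minimizer of $\mathcal{J}$ is $\mathbf{X}^{\star} = \mu(\tilde{\mathbf{L}} + \mu\mathbf{I})^{-1}\mathbf{X}^{(0)}$, which is not literally $\tilde{\mathbf{A}}\mathbf{X}^{(0)}$; the equivalence is therefore at the level of the \emph{update dynamics}, i.e., message passing realizes the gradient flow that minimizes $\mathcal{J}$. The role of the Tikhonov term is essentially technical: with $\mu>0$ the objective is strongly convex so a unique minimizer exists and the anchor to $\mathbf{X}^{(0)}$ prevents the trivial collapse to the kernel of $\tilde{\mathbf{L}}$, but because the per-step contribution of this term scales as $\mu$, it becomes negligible in the propagation limit and the pure Dirichlet-energy gradient $\tilde{\mathbf{L}}\mathbf{X}^{(l)}$ dominates. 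This is exactly the view needed downstream, because it lets the subsequent analysis attribute the distortions in message passing to the minimization of the Dirichlet energy $\mathrm{Tr}(\mathbf{X}^T\tilde{\mathbf{L}}\mathbf{X})$ and hence to the graph topology encoded in $\tilde{\mathbf{L}}$.
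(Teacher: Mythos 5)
Your proposal is correct and follows essentially the same route as the paper: compute $\nabla_{\mathbf{X}}\mathcal{J}=\tilde{\mathbf{L}}\mathbf{X}+\mu(\mathbf{X}-\mathbf{X}^{(0)})$, take one gradient step, and set $\eta=1$ to recover multiplication by $\tilde{\mathbf{A}}$. The only difference is how the Tikhonov term is dispatched: the paper evaluates the step at $\mathbf{X}=\mathbf{X}^{(0)}$, where $\mu(\mathbf{X}^{(0)}-\mathbf{X}^{(0)})=\mathbf{0}$ exactly and no limit is needed, whereas you take a general iterate $\mathbf{X}^{(l)}$ and send $\mu\to 0^{+}$ --- mildly in tension with the fixed $\mu>0$ in the statement, but arguably a more honest account of why layers beyond the first still reduce to $\tilde{\mathbf{A}}\mathbf{X}^{(l)}$. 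Your closed-form minimizer $\mu(\tilde{\mathbf{L}}+\mu\mathbf{I})^{-1}\mathbf{X}^{(0)}$ and the "equivalence only at the level of update dynamics" caveat both match what the paper derives and relies on in its Corollary~\ref{col:norm}.
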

\noindent
\vspace{-5pt}
\textit{$\square$ Proof in Appendix~\ref{appendix:proof1}.}

Lemma~\ref{lemma:equiv} reveals that the message passing operation implicitly solves a global smoothness objective. Topology bias directly alters this energy landscape, thus shaping the resulting embeddings in a degree-dependent manner, as shown by the positive correlation in Figure~\ref{fig:bias}(b).

\subsection{Effect of Topology Bias on Embedding Norms}
Next, we characterize how embeddings after message passing reflect the topology bias. Specifically, we show that nodes with higher degrees unavoidably accumulate larger embedding norms, a dynamic we refer to as \emph{norm concentration}.

\begin{restatable}{corollary}{coro}(\textbf{Norm Concentration})
\label{col:norm}
    Let $\mathbf{x}^*_v$ denote the embedding of node $v$ in the minimizer $\mathbf{X}^*$ of Equation~\ref{eq:energy_op}. Its squared norm is lower bounded by its degree $d_v$:
    \begin{equation}
        \|\mathbf{x}^*_v\|_2^2 \geq d_v \cdot C(\tilde{\mathbf{L}}, \mathbf{X}^{(0)}),
    \end{equation}
    where $C$ depends on the graph Laplacian $\tilde{\mathbf{L}}$ and initial embedding $\mathbf{X}^{(0)}$.
\end{restatable}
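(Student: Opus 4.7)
The plan is to combine the closed-form characterization of $\mathbf{X}^*$ that follows from Lemma~\ref{lemma:equiv} with a spectral decomposition of $\tilde{\mathbf{L}}$, and then isolate the null-space eigenmode, which is precisely the component responsible for the $\sqrt{d_v}$ scaling of the entry norms.

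First, I will read off the first-order condition for the strongly convex objective in Equation~\ref{eq:energy_op}: setting the gradient to zero gives $(\tilde{\mathbf{L}} + \mu\mathbf{I})\mathbf{X}^* = \mu\mathbf{X}^{(0)}$, and hence $\mathbf{X}^* = \mu(\tilde{\mathbf{L}} + \mu\mathbf{I})^{-1}\mathbf{X}^{(0)}$. Diagonalizing $\tilde{\mathbf{L}} = \sum_k \lambda_k \phi_k \phi_k^{T}$ with $0 = \lambda_0 \leq \lambda_1 \leq \dots$, I will invoke the standard fact that the null eigenvector of the symmetrically normalized Laplacian is $\phi_0 = \mathbf{D}^{1/2}\mathbf{1}/\sqrt{\mathrm{vol}(\mathcal{G})}$, which yields the crucial entrywise identity $(\phi_0)_v = \sqrt{d_v/\mathrm{vol}(\mathcal{G})}$. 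Substituting into the closed form, the $v$-th row of $\mathbf{X}^*$ decomposes as $\mathbf{x}^*_v = (\phi_0)_v\,\mathbf{m} + \mathbf{r}_v$, where $\mathbf{m} \defeq \phi_0^{T}\mathbf{X}^{(0)}$ is the degree-weighted mean of the initial rows and $\mathbf{r}_v \defeq \sum_{k\geq 1}\frac{\mu}{\lambda_k+\mu}(\phi_k)_v(\phi_k^{T}\mathbf{X}^{(0)})$ collects the non-trivial modes.

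The key observation is that the null-space term alone already realises the target scaling: $\|(\phi_0)_v\,\mathbf{m}\|_2^2 = d_v\,\|\mathbf{m}\|_2^2/\mathrm{vol}(\mathcal{G})$, which is exactly of the form $d_v \cdot C$. What remains is to argue that the residual $\mathbf{r}_v$ cannot erase this $d_v$-proportional piece. I would apply Cauchy--Schwarz together with the Parseval identity $\sum_k (\phi_k)_v^2 = 1$ to obtain a $v$-uniform bound of the form $\|\mathbf{r}_v\|_2^2 \leq \big(\frac{\mu}{\gamma+\mu}\big)^2 \|\mathbf{X}^{(0)}\|_F^2$, where $\gamma \defeq \min_{k\geq 1}\lambda_k$ is the spectral gap of $\tilde{\mathbf{L}}$; note that this bound depends only on global quantities of $\tilde{\mathbf{L}}$ and $\mathbf{X}^{(0)}$. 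Combined with the elementary inequality $\|a+b\|_2^2 \geq (1-\varepsilon)\|a\|_2^2 - (\varepsilon^{-1}-1)\|b\|_2^2$ applied to $a = (\phi_0)_v\mathbf{m}$ and $b = \mathbf{r}_v$, this produces the claimed lower bound with an explicit constant $C$ expressible in terms of $\|\mathbf{m}\|_2^2$, $\mathrm{vol}(\mathcal{G})$, $\gamma$, $\mu$ and $\|\mathbf{X}^{(0)}\|_F$.

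The hard part is the cross term between the null-space contribution and the residual: because $\mathbf{r}_v$ is a row-vector in $\mathbb{R}^d$, it need not be orthogonal to $\mathbf{m}$, so a crude triangle-inequality argument can yield a vacuous (possibly negative) bound for nodes with very small $d_v$. My preferred way around this is to tune $\varepsilon$ so that the spectral-gap control above absorbs $\mathbf{r}_v$ into $C$; equivalently, one may work in the smoothed regime $\mu \ll \gamma$ (corresponding to many stacked message-passing layers), where the contraction factor $\mu/(\gamma+\mu)$ drives the residual to zero and the null-space mode dominates by a large margin. Either route produces a constant $C$ that is a genuine function of $\tilde{\mathbf{L}}$ and $\mathbf{X}^{(0)}$, matching the statement of the corollary.
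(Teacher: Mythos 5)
Your proposal follows the same skeleton as the paper's proof: the closed form $\mathbf{X}^* = \mu(\tilde{\mathbf{L}}+\mu\mathbf{I})^{-1}\mathbf{X}^{(0)}$, the eigendecomposition of $\tilde{\mathbf{L}}$, and the isolation of the null eigenvector $\propto \mathbf{D}^{1/2}\mathbf{1}$, whose $v$-th entry carries the $\sqrt{d_v}$ factor. Where you diverge is in handling the non-null modes, and there you are more careful than the paper. The paper's proof simply discards every term with $\lambda_k>0$ and asserts $\bigl\|\sum_k \tfrac{\mu}{\lambda_k+\mu}\mathbf{u}_k[v]\,\mathbf{c}_k\bigr\|_2^2 \geq \bigl\|\mathbf{u}_1[v]\,\mathbf{c}_1\bigr\|_2^2$; but the spectral coefficient rows $\mathbf{c}_k=\mathbf{u}_k^T\mathbf{X}^{(0)}$ all live in $\mathbb{R}^d$ with $N\gg d$ of them, so they are not mutually orthogonal and dropping terms does not lower-bound the norm --- precisely the cross-term obstruction you identify. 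Your fix via $\|a+b\|_2^2\geq(1-\varepsilon)\|a\|_2^2-(\varepsilon^{-1}-1)\|b\|_2^2$ together with the spectral-gap bound $\|\mathbf{r}_v\|_2^2\leq\bigl(\tfrac{\mu}{\gamma+\mu}\bigr)^2\|\mathbf{X}^{(0)}\|_F^2$ is legitimate, but note it delivers a bound of the form $d_v C_1 - C_2$ rather than $d_v C$; to reach the stated form you still need (i) $\mathbf{m}=\phi_0^T\mathbf{X}^{(0)}\neq 0$ and $\mu$ small enough relative to $\gamma$ (or $\varepsilon$ tuned) so that $C_1>C_2>0$, and (ii) the one-line observation that $d_v\geq 1$ for every non-isolated node, whence $d_v C_1 - C_2 \geq d_v(C_1-C_2)$. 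With those two lines added, your argument closes and --- unlike the paper's --- is actually valid; the trade-off is that your constant $C$ is only guaranteed positive under a nondegeneracy and smoothing condition, whereas the paper's shorter argument claims an unconditional bound it does not in fact establish.
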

\vspace{-5pt}
\textit{$\square$ Proof in Appendix~\ref{appendix:proof2}.}

Corollary~\ref{col:norm} formalizes that popular nodes (with larger $d_v$) necessarily acquire larger norms during message passing, often resulting in higher recommendation scores. This effect is also noted by~\citet{emb2} in their empirical observations.

\subsection{Topology Bias in Embedding Update Dynamics}

We further show how topology bias persists and is amplified at every message passing step by bounding each node's embedding update in terms of its local graph energy.

\begin{restatable}{lemma}{lemmaupdate}
\label{lemma:update}
    The squared norm of the embedding update for node $v$ in a single message passing layer is upper bounded as follows:
    \begin{equation}
        \|\Delta \mathbf{x}_v\|_2^2 \leq E(v),
    \end{equation}
    where $\Delta \mathbf{x}_v$ denotes the change in embedding for node $v$ after message passing, and $E(v)$ is the local Dirichlet energy of $v$, as defined in Definition~\ref{def:node_dirichlet}.
\end{restatable}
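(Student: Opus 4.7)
The plan is to unfold a single message-passing step through the identity $\tilde{\mathbf{A}}=\mathbf{I}-\tilde{\mathbf{L}}$, which turns the per-node update into precisely the sum of Laplacian differences that defines $E_v$. First I would write $\mathbf{X}^{(l+1)}-\mathbf{X}^{(l)}=-\tilde{\mathbf{L}}\mathbf{X}^{(l)}$ and read off row $v$, obtaining
\[
\Delta\mathbf{x}_v \;=\; -\mathbf{x}_v + \sum_{j\in\mathcal{N}(v)} \frac{\mathbf{x}_j}{\sqrt{d_v\, d_j}}.
\]

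The key algebraic maneuver is to use $|\mathcal{N}(v)|=d_v$ to distribute the $-\mathbf{x}_v$ term uniformly inside the sum and then factor out $1/\sqrt{d_v}$, giving
\[
\Delta\mathbf{x}_v \;=\; -\frac{1}{\sqrt{d_v}} \sum_{j\in\mathcal{N}(v)} \left(\frac{\mathbf{x}_v}{\sqrt{d_v}} - \frac{\mathbf{x}_j}{\sqrt{d_j}}\right).
\]
Each summand is now exactly the vector whose squared norm appears inside $E_v(\mathbf{x}_v)$. Squaring the norm on both sides and applying Cauchy--Schwarz across the $d_v$ summands produces an overall factor of $d_v$ that exactly cancels the $1/d_v$ from $1/\sqrt{d_v}$, leaving
\[
\|\Delta\mathbf{x}_v\|_2^2 \;\leq\; \sum_{j\in\mathcal{N}(v)} \norm{\mathbf{x}_v/\sqrt{d_v} - \mathbf{x}_j/\sqrt{d_j}}_2^2,
\]
which equals $2E_v(\mathbf{x}_v)$ under the $\tfrac{1}{2}$-convention of Definition~\ref{def:node_dirichlet} and can be absorbed into the constant implicit in the $E(v)$ notation of the statement.

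The main obstacle is essentially bookkeeping: the cancellation between the degree normalization inside the Laplacian and the $d_v$ penalty from Cauchy--Schwarz must align exactly, and one has to reconcile the leading constant against the $\tfrac{1}{2}$ baked into the definition of $E_v$. There is no deeper analytical difficulty, but it is worth noting that the Cauchy--Schwarz step is saturated precisely when all neighbor difference vectors $\mathbf{x}_v/\sqrt{d_v}-\mathbf{x}_j/\sqrt{d_j}$ are collinear, so no qualitatively sharper bound is available without additional geometric assumptions on the neighborhood of $v$.
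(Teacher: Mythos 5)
Your proposal is correct and follows essentially the same route as the paper's proof in Appendix~\ref{appendix:proof3}: rewrite the row-$v$ update as $\frac{1}{\sqrt{d_v}}\sum_{j\in\mathcal{N}(v)}\bigl(\mathbf{x}_j/\sqrt{d_j}-\mathbf{x}_v/\sqrt{d_v}\bigr)$ and apply Cauchy--Schwarz so that the factor $d_v$ cancels the $1/d_v$. You are also right to flag the factor of $2$: with the $\tfrac12$ convention of Definition~\ref{def:node_dirichlet} the argument actually yields $\|\Delta\mathbf{x}_v\|_2^2\leq 2E_v(\mathbf{x}_v)$, a constant the paper's own proof silently drops when it equates $\sum_{j\in\mathcal{N}(v)}\|\mathbf{x}_j/\sqrt{d_j}-\mathbf{x}_v/\sqrt{d_v}\|_2^2$ with $E(v)$.
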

\vspace{-5pt}
\textit{ $\square$ Proof in Appendix~\ref{appendix:proof3}.}

Lemma~\ref{lemma:update} exposes a self-reinforcing cycle: since the Dirichlet energy $E(v)$ itself is greater for higher-degree (popular) nodes as shown in Figure~\ref{fig:bias}(b), these nodes receive larger embedding updates at each propagation step and potentially aggregate more information.

\section{Methodology}
\label{sec:method}

\subsection{Hodge Decomposition and Simplicial Dirichlet Energy}
\begin{theorem}[Hodge Decomposition~\cite{hodgebook}]
    \label{theorm:hodge_decomposition}
    The $k$-simplicial space $\mathbb{R}^{N_k}$ admits an orthogonal direct sum decomposition 
    \begin{equation}
    \mathbb{R}^{\mathcal{X}_k}=\mathrm{im}(\mathbf{B}_k^T)\oplus\mathrm{ker}(\mathbf{L}_k)\oplus\mathrm{im}(\mathbf{B}_{k+1}),
    \end{equation}
where $\mathrm{im}(\cdot)$ and $\mathrm{ker}(\cdot)$ are shorthand for the image and kernel spaces of the matrices, $\oplus$ represents the union of orthogonal subspaces, $N_k$ is the cardinality of the space of $k$-simplex signals.
\end{theorem}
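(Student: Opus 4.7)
The plan is to prove the decomposition by reducing it to two standard facts: the fundamental boundary-of-boundary identity $\mathbf{B}_k \mathbf{B}_{k+1} = 0$, and the elementary linear-algebra fact that for any real matrix $\mathbf{M}$ one has the orthogonal splitting $\mathbb{R}^n = \mathrm{im}(\mathbf{M}^T) \oplus \mathrm{ker}(\mathbf{M})$. I would prove the boundary identity first (either cited from standard references or verified directly from the signed incidence definition of $\mathbf{B}_k$ in Definition~\ref{def:boundary_matrix}), since every subsequent step leans on it.

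First I would establish pairwise orthogonality of the three summands. For any $x = \mathbf{B}_k^T a \in \mathrm{im}(\mathbf{B}_k^T)$ and $y = \mathbf{B}_{k+1} b \in \mathrm{im}(\mathbf{B}_{k+1})$, the inner product is $x^T y = a^T \mathbf{B}_k \mathbf{B}_{k+1} b = 0$ by the boundary identity. Next I would characterize $\mathrm{ker}(\mathbf{L}_k)$ as the simultaneous kernel $\mathrm{ker}(\mathbf{B}_k) \cap \mathrm{ker}(\mathbf{B}_{k+1}^T)$: from $\mathbf{L}_k x = 0$ one gets $x^T \mathbf{L}_k x = \|\mathbf{B}_k x\|_2^2 + \|\mathbf{B}_{k+1}^T x\|_2^2 = 0$, forcing each term to vanish, and the converse is immediate from the definition of $\mathbf{L}_k$ in Definition~\ref{def:hodge_laplacian}. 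Since $\mathrm{ker}(\mathbf{B}_k) = \mathrm{im}(\mathbf{B}_k^T)^\perp$ and $\mathrm{ker}(\mathbf{B}_{k+1}^T) = \mathrm{im}(\mathbf{B}_{k+1})^\perp$, this immediately gives orthogonality of $\mathrm{ker}(\mathbf{L}_k)$ with the other two summands.

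Finally I would argue that the three subspaces span the whole space. The cleanest way is to apply the standard orthogonal decomposition $\mathbb{R}^{\mathcal{X}_k} = \mathrm{im}(\mathbf{B}_k^T) \oplus \mathrm{ker}(\mathbf{B}_k)$ and then decompose $\mathrm{ker}(\mathbf{B}_k)$ further as $\bigl(\mathrm{ker}(\mathbf{B}_k) \cap \mathrm{im}(\mathbf{B}_{k+1})\bigr) \oplus \bigl(\mathrm{ker}(\mathbf{B}_k) \cap \mathrm{ker}(\mathbf{B}_{k+1}^T)\bigr)$, using the splitting of $\mathbb{R}^{\mathcal{X}_k}$ induced by $\mathbf{B}_{k+1}$ and noting that $\mathrm{im}(\mathbf{B}_{k+1}) \subseteq \mathrm{ker}(\mathbf{B}_k)$ by the boundary identity. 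Identifying the second summand with $\mathrm{ker}(\mathbf{L}_k)$ via the previous step completes the proof. A dimension-count via the rank--nullity theorem applied to $\mathbf{B}_k$ and $\mathbf{B}_{k+1}$ would serve as a sanity check.

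The main obstacle I anticipate is the third step: one has to be careful that $\mathrm{im}(\mathbf{B}_{k+1})$ sits entirely inside $\mathrm{ker}(\mathbf{B}_k)$ (which is where the boundary identity is really doing work), rather than merely inside the full space, so that the refinement of $\mathrm{ker}(\mathbf{B}_k)$ into two orthogonal pieces is legitimate. Everything else is either a one-line inner-product computation or invocation of the standard image/kernel splitting.
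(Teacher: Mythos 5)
Your proof is correct and complete. Note that the paper itself offers no proof of this theorem --- it is stated as a known result imported from the cited reference on Hodge theory, so there is nothing in the paper to compare against; your argument (orthogonality via $\mathbf{B}_k\mathbf{B}_{k+1}=0$, the identification $\mathrm{ker}(\mathbf{L}_k)=\mathrm{ker}(\mathbf{B}_k)\cap\mathrm{ker}(\mathbf{B}_{k+1}^T)$, and the two-stage splitting $\mathbb{R}^{N_k}=\mathrm{im}(\mathbf{B}_k^T)\oplus\mathrm{ker}(\mathbf{B}_k)$ followed by the refinement of $\mathrm{ker}(\mathbf{B}_k)$ using $\mathrm{im}(\mathbf{B}_{k+1})\subseteq\mathrm{ker}(\mathbf{B}_k)$) is the standard one, and you correctly isolate the only delicate point, namely that the containment $\mathrm{im}(\mathbf{B}_{k+1})\subseteq\mathrm{ker}(\mathbf{B}_k)$ is what legitimizes splitting $\mathrm{ker}(\mathbf{B}_k)$ along the decomposition induced by $\mathbf{B}_{k+1}$.
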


Theorem~\ref{theorm:hodge_decomposition} shows that any $k$-order simplicial signal $\mathbf{x}_{\sigma^k}$ can be decomposed~\cite{signal_decompose}:
\begin{equation}
    \mathbf{x}_{\sigma^k} = \mathbf{x}_{\sigma^k, G}+\mathbf{x}_{\sigma^k, H}+\mathbf{x}_{\sigma^k, C},
    \label{eq:decompose}
\end{equation}
where the gradient component $\mathbf{x}_{\sigma^k, G}=\mathbf{B}_k^T\mathbf{x}_{\sigma^{k-1}}$~(corresponding to Equation~\ref{eq:low2high}), curl component $\mathbf{x}_{\sigma^k, C}=\mathbf{B}_{k+1}\mathbf{x}_{\sigma^{k+1}}$~(corresponding to Equation~\ref{eq:high2low}), and the harmonic component $\mathbf{x}_{\sigma^k, H}$ satisfies $\mathbf{L}_k\mathbf{x}_{\sigma^k, H}=\mathbf{0}$.
This decomposition provides theoretical guidance for the interaction and conversion between simplicial signals of different orders.

\begin{definition}[Simplicial Dirichlet Energy~\cite{simpenergy}]
    The Dirichlet energy of simplices of order $k$ can be defined as:
    \begin{align}   
    E_{\sigma^k}(\mathbf{X}_{\sigma_i^k})&\defeq\norm{\mathbf{B}_k\mathbf{X}_{\sigma^k}}_2^2+\norm{\mathbf{B}^T_{k+1}\mathbf{X}_{\sigma^k}}_2^2\\
    &=\underbrace{\sum_{\sigma^k_i\cap\sigma^k_j\neq\emptyset}}_{\text{shared lower simplices}}\norm{\mathbf{x}_{\sigma_i^k}-\mathbf{x}_{\sigma_j^k}}_2^2 + \underbrace{\sum_{\sigma^k_i,\sigma^k_j\subset\sigma^{k+1}}}_{\text{shared upper simplices}}\norm{\mathbf{x}_{\sigma_i^k}-\mathbf{x}_{\sigma_j^k}}_2^2.
    \end{align}
\end{definition}

Different from the Dirichlet energy defined on a graph, the simplicial Dirichlet energy consists of two terms: energy between simplices sharing the same lower order component simplices~(lower simplices) and energy between simplices consisting of the same higher order simplices~(higher simplices). 

Consider optimizing the simplicial Dirichlet energy similar to Equation~\ref{eq:energy_op}:
\begin{equation}
    \min_{\mathbf{X}_{\sigma^k}}\quad \frac{1}{2}E_{\sigma^k}(\mathbf{X}_{\sigma_i^k})+\frac{\mu}{2}\norm{\mathbf{X}_{\sigma^k}-\mathbf{X}_{\sigma^k}^{(0)}}_2^2.
\end{equation}
The gradient with respect to $\mathbf{X}_{\sigma^k}$ is:
\begin{equation}
\nabla_{\mathbf{X}_{\sigma^k}}=\mathbf{B}_k^T\mathbf{B}_k \mathbf{X}_{\sigma^k}+\mathbf{B}_{k+1}\mathbf{B}_{k+1}^T \mathbf{X}_{\sigma^k}+\mu(\mathbf{X}_{\sigma^k}-\mathbf{X}_{\sigma^k}^{(0)})=\mathbf{L}_k\mathbf{X}_{\sigma^k}+\mu(\mathbf{X}_{\sigma^k}-\mathbf{X}_{\sigma^k}^{(0)}).
\end{equation}

One gradient descent step with learning rate $\eta$ can be written as:
\begin{equation}
\mathbf{X}_{\sigma^k}^{(1)}=\mathbf{X}_{\sigma^k}^{(0)}-\eta\nabla_{\mathbf{X}_{\sigma^k}}=\mathbf{X}_{\sigma^k}^{(0)}-\eta\mathbf{L}_k\mathbf{X}_{\sigma^k}^{(0)}=(\mathbf{I}-\eta\mathbf{L}_k)\mathbf{X}_{\sigma^k}^{(0)}.
\label{eq:simplicial_energy_opt}
\end{equation}
Notably, Equation~\ref{eq:simplicial_energy_opt} is the simplicial version of graph message passing.
The orthonormal eigen-decomposition of $\mathbf{L}_k$ can be written as:
\begin{equation}
\mathbf{L}_k=\sum_{i=1}^N\lambda_i\mathbf{u}_i\mathbf{u}_i^T,
\end{equation}
where $0=\lambda_1=\ldots=\lambda_r<\lambda_{r+1}<\ldots\leq\lambda_N$ are the eigenvalues and $\mathbf{u}_i$ are the corresponding eigenvectors.
Similarly, we can define a resolvent to this optimization problem as in Equation~\ref{eq:resolve}:
\begin{equation}
    \mathbf{R}_k=\mu(\mathbf{L}_k+\mu\mathbf{I})^{-1}.
\end{equation}
For every initial embedding vector $\mathbf{v}$, $\mathbf{R}_k\mathbf{v}$ yields the corresponding energy minimizer.
Now we show how simplicial message passing can alleviate the topology bias.
\begin{restatable}{lemma}{lemmasimp}
\label{lemma:simp_bias}
    For any $\mathbf{v}\in\mathrm{ker}(\mathbf{B}_{k+1})$ we have:
        $\norm{\mathbf{R}_k\mathbf{v}}_2^2< \norm{\mathbf{v}}_2^2.$

\end{restatable}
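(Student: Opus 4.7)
The plan is to diagonalize the resolvent $\mathbf{R}_k$ and reduce the claim to a coefficientwise inequality on the spectrum of $\mathbf{L}_k$. First, I would apply functional calculus to the orthonormal eigendecomposition $\mathbf{L}_k=\sum_{i=1}^{N}\lambda_i\mathbf{u}_i\mathbf{u}_i^T$ stated in the excerpt, obtaining
\begin{equation}
\mathbf{R}_k \;=\; \mu(\mathbf{L}_k+\mu\mathbf{I})^{-1} \;=\; \sum_{i=1}^{N}\frac{\mu}{\lambda_i+\mu}\,\mathbf{u}_i\mathbf{u}_i^T.
\end{equation}
Because $\mu>0$ and every $\lambda_i\geq 0$, each spectral gain $\mu/(\lambda_i+\mu)$ lies in $(0,1]$, attaining $1$ exactly when $\lambda_i=0$.

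Next, I would expand $\mathbf{v}=\sum_i c_i\mathbf{u}_i$ in the same orthonormal basis. Orthonormality of $\{\mathbf{u}_i\}$ immediately yields
\begin{equation}
\norm{\mathbf{R}_k\mathbf{v}}_2^2 \;=\; \sum_{i=1}^{N} c_i^2\!\left(\tfrac{\mu}{\lambda_i+\mu}\right)^{\!2} \;\leq\; \sum_{i=1}^{N} c_i^2 \;=\; \norm{\mathbf{v}}_2^2,
\end{equation}
with equality precisely when $c_i=0$ for every index with $\lambda_i>0$, i.e.\ when $\mathbf{v}\in\ker(\mathbf{L}_k)$. This gives the desired non-strict bound for free; the only remaining task is to upgrade $\leq$ to $<$ under the lemma's hypothesis.

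For strictness, I would invoke the Hodge decomposition (Theorem~\ref{theorm:hodge_decomposition}): the harmonic subspace $\ker(\mathbf{L}_k)$ is orthogonal to both $\mathrm{im}(\mathbf{B}_k^T)$ and $\mathrm{im}(\mathbf{B}_{k+1})$. The hypothesis identifies $\mathbf{v}$ with the non-harmonic (gradient) block of the decomposition, so $\mathbf{v}$ must have nonzero mass on at least one eigenvector with $\lambda_i>0$, which forces the spectral inequality above to be strict.

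The main obstacle I anticipate is not the spectral step — that is essentially a one-line eigenvalue calculation — but the final identification of the stated condition with a non-harmonic subspace under the paper's orientation and indexing conventions. Concretely, one must verify via the Hodge decomposition that $\mathbf{v}$ lies in the orthogonal complement of $\ker(\mathbf{L}_k)$, which may require implicitly excluding the zero vector and carefully tracking whether the paper's "$\ker(\mathbf{B}_{k+1})$" refers to the curl-free/gradient component of $k$-signals. Once that alignment is made precise, the strict contraction follows directly from the two displayed inequalities.
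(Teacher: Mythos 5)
Your spectral route is the same one the paper takes: diagonalize $\mathbf{R}_k$ via the eigendecomposition of $\mathbf{L}_k$, observe that each spectral gain $\mu/(\lambda_i+\mu)$ lies in $(0,1]$, and conclude by Parseval. Your non-strict bound $\norm{\mathbf{R}_k\mathbf{v}}_2^2=\sum_i c_i^2\,\mu^2/(\lambda_i+\mu)^2\le\norm{\mathbf{v}}_2^2$ is correct, and is in fact \emph{more} careful than the paper's own display, which applies the bound $\mu/(\lambda_i+\mu)\le\mu/(\lambda_{r+1}+\mu)$ uniformly over all $i$ — an inequality that is false for the indices $i\le r$ with $\lambda_i=0$, and hence valid only if those components of $\mathbf{v}$ vanish.

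That is exactly where the genuine gap sits, and you flag it but do not close it; neither does the paper. Equality in your coefficientwise bound holds precisely when $\mathbf{v}\in\ker(\mathbf{L}_k)$, so strictness requires $\mathbf{v}$ to carry nonzero mass on some eigenvector with $\lambda_i>0$. The stated hypothesis does not supply this. First, $\ker(\mathbf{B}_{k+1})$ lives in the space of $(k+1)$-signals, so for the statement to typecheck it must be read as $\ker(\mathbf{B}_{k+1}^T)$. Second, by the Hodge decomposition $\ker(\mathbf{B}_{k+1}^T)=\mathrm{im}(\mathbf{B}_k^T)\oplus\ker(\mathbf{L}_k)$, which \emph{contains} the harmonic subspace rather than excluding it: any nonzero harmonic $\mathbf{v}$ satisfies $\mathbf{R}_k\mathbf{v}=\mathbf{v}$ and violates the strict inequality, as does $\mathbf{v}=\mathbf{0}$. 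So your step~3 assertion that the hypothesis ``identifies $\mathbf{v}$ with the non-harmonic block'' is not justified and is false in general. The lemma holds under the stronger hypothesis $\mathbf{0}\neq\mathbf{v}\in\mathrm{im}(\mathbf{B}_k^T)$ (equivalently, $\mathbf{v}$ nonzero and orthogonal to $\ker(\mathbf{L}_k)$), under which your two displayed inequalities immediately give the strict contraction. With that correction your argument and the paper's coincide; as written, both share the same unclosed step.
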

\textit{$\square$ Proof of Lemma~\ref{lemma:simp_bias} is in Appendix~\ref{appendix:proof4}.}

Lemma~\ref{lemma:simp_bias} reveals that in simplicial message passing, for any embedding in the kernel of the upper adjacency operator ($\mathrm{ker}(\mathbf{B}_{k+1})$), the norm is strictly reduced by the resolvent operator $\mathbf{R}_k$. This means that, unlike traditional graph-based propagation which tends to inflate the norms of embeddings for popular (high-degree) nodes, simplicial message passing can contract such dominant embeddings. By shrinking oversized embedding norms where bias accumulates, the simplicial approach helps regularize the embedding space, leading to a more balanced embedding landscape.

\begin{figure}[t]
    \centering
    \includegraphics[width=\textwidth]{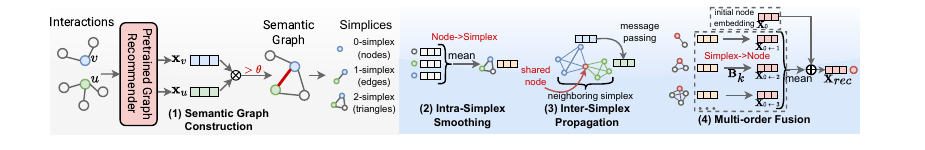}
    \caption{The pipeline of our proposed TSP, consisting of four main steps: (1) Semantic Graph Construction, which builds a semantic graph based on node embeddings; (2) Intra-Simplex Smoothing, which aggregates information within each simplex; (3) Inter-Simplex Propagation, which performs message passing between simplices; and (4) Multi-Order Fusion, which combines embeddings from different orders of simplices for the final recommendation.}
    \label{fig:framework}
    \vspace{-8pt}
\end{figure}
\subsection{The Proposed TSP}
Guided by our theoretical analysis, we propose \textbf{Test-time Simplicial Propagation (\model)} based on simplicial Dirichlet energy optimization, extending the message passing procedure to simplicial complexes. \model\ comprises the following key modules: \textbf{Semantic Graph Construction}, \textbf{Intra-Simplex Smoothing}, \textbf{Inter-Simplex Propagation}, and \textbf{Multi-Order Fusion}. An overview of the whole pipeline is in Figure~\ref{fig:framework}.
We also provide an algorithm description in pseudo code in Appendix~\ref{appendix:algo}.
\paragraph{Semantic Graph Construction}

The objective of this module is to furnish a topology that is more reflective of \emph{semantic} relationships between nodes, rather than relying solely on raw interaction patterns. Formally, given pretrained node embedding $\mathbf{X}$ and hyperparameter $\theta$ controlling the similarity threshold, we can construct the adjacency matrix of the semantic graph $\mathbf{A}_S$:
\begin{equation}
    \mathbf{A}_S[i, j] = \begin{cases}
        1,\quad \mathbf{x}_i^T\mathbf{x}_j>=\theta\\
        0,\quad \text{otherwise}
    \end{cases},
\end{equation}
where $\mathbf{x}_i$ is the embedding of node $i$, $\mathbf{A}_S[i, j]$ means the element at the $i$-th row, $j$-th column in $\mathbf{A}_S$. This process naturally densifies the tail nodes by giving low-degree nodes additional connections, thus partially alleviating the topology bias~\cite{semantic1, semantic2}. Then we lift this semantic graph into simplicial complex, as described in Appendix~\ref{appendix:liftsc}. 

\paragraph{Intra-Simplex Smoothing}

Once we have $\mathbf{A}_S$, we lift it into a simplicial complex by incorporating nodes, edges, and higher-order simplices (triangles, tetrahedrons, etc.) that capture multi-node connectivity. Each $k$-simplex $\sigma^k$ consists of $k+1$ nodes joined by semantic adjacency. Let $\mathbf{B}_k$ be the boundary matrix of $k$-simplices. These matrices enable us to convert node embeddings to high-order simplices signals as in Equation~\ref{eq:decompose}. Formally, we can aggregate the information for $\sigma^k$ from its component nodes as follows:
\begin{equation}
    (\mathbf{B}_k^T\mathbf{B}_{k-1}^T\ldots\mathbf{B}_1^T)\,\mathbf{X}
    \label{eq:low2high}
\end{equation}
where $\mathbf{X}_{\sigma^k}$ is the embedding matrix of $k$-simplices.
This process performs smoothing inside the simplex, and also provides initial embeddings for each simplex without the need for embedding learning. The factor $\prod_{i=1}^k\mathbf{B_k}$ converts the embeddings of component nodes into simplex embeddings.

\paragraph{Inter-Simplex Propagation}
Although intra-simplex averaging helps to smooth embeddings within each simplex structure, it alone does not capture the relations \emph{between} simplices. To address this, we introduce \emph{inter-simplex propagation}. More specifically, we utilize the Hodge Laplacians in Equation~\ref{eq:Lk} to define the message passing between simplices within each order $k$:
\begin{equation}
    \mathbf{X}_{\sigma^k}^{(l+1)} = \bigl(\mathbf{I} - \beta \mathbf{L}_k\bigr)\,\mathbf{X}_{\sigma^k}^{(l)},
\end{equation}
where $\mathbf{L}_k$ is the Hodge Laplacian of order $k$, $l$ is the message passing layer, and $\beta$ controls the signal filtering on simplices~\cite{scfilter}. This message passing process further refine the representation based on higher-order simplicial structure, incorporating simplex interactions.

\paragraph{Multi-Order Fusion}
To obtain the final recommendation scores, we need to fuse the embeddings from different orders of simplices. We can perform the reverse process of Equation~\ref{eq:low2high} to fuse the embeddings of all the simplices into node embeddings:
\begin{align}
 \mathbf{X}_{0\leftarrow k}=(\mathbf{B}_1\mathbf{B}_2\ldots\mathbf{B}_k)\mathbf\,{X}_{\sigma^k}^{(L)}
\label{eq:high2low}
\end{align}
where $\mathbf{X}_{0\leftarrow k}$ represents the node embeddings converted from $k$-simplices. 
We then produce the final representation for recommendation by performing mean pooling of multi-order representations:
\begin{equation}
\mathbf{X}_{\text{rec}} = \mathbf{X}_0 +  \mathrm{Mean}\bigl(\;\mathbf{X}_{0\leftarrow 1},\;\ldots,\;\mathbf{X}_{0\leftarrow K}\bigr),
\end{equation}
where $\mathbf{X}_0$ is the original pretrained embedding, and $\mathbf{X}_{0\leftarrow k}$ are the embeddings from $k$-order simplices after message passing. This multi-order fusion ensures that our final embeddings incorporate both interaction information and higher-order semantic relations.

\section{Experiments}
\label{sec:experiments}

In this section, we empirically evaluate the effectiveness of our proposed method \model. We aim to answer the following research questions:
\begin{itemize}[leftmargin=*]
    \item \textbf{RQ1:} How does \model perform compared to existing state-of-the-art debiasing methods?
    \item \textbf{RQ2:} Can \model produce fairer item embeddings, as suggested by the theoretical analysis? 
    \item \textbf{RQ3:} What is the contribution of each component in \model to model performance?  
    \item \textbf{RQ4:} How much computational overhead is introduced by \model? 
\end{itemize}
\vspace{-5pt}
\subsection{Experimental Setup}

\textbf{Datasets} We conduct experiments on five real-world datasets representing diverse domains: Yelp2018 (business reviews),
Gowalla (location check-ins),
Adressa (news),
Globo (news),
and ML10M (movie ratings).
User ratings in ML10M are converted to implicit feedback (1 if rated, 0 otherwise). Details of these dataset statistics are in Appendix \ref{appendix:datasets}.


\textbf{Evaluation Protocols} We split the datasets into 80\%, 10\%, 10\% as train, validation, and test sets following the unbiased sampling method proposed by \citet{macr}. We report Recall@20~(R@20) and NDCG@20~(N@20) averaged across all users. Specifically, we also report performance of the 20\% least popular items (tail items) beside overall performance. Evaluation details are in Appendix~\ref{appendix:evaluate}.

\textbf{Backbone Models} We demonstrate \model's effectiveness by applying it to three strong GNN-based backbone models: \textbf{LightGCN}~\citep{lgn} is a simplified GCN model widely used for collaborative filtering. \textbf{SimGCL}~\citep{simgcl} is a graph contrastive learning method using noise injection for augmentation. \textbf{LightGCL}~\citep{lightgcl} is the state-of-the-art contrastive learning model using SVD for graph view construction.

\textbf{Debiasing Baselines} We compare \model against several representative debiasing methods:
\textbf{IPS}~\citep{ips} uses inverse propensity scoring to reweight interactions based on item popularity.
\textbf{CausE}~\citep{cause} relies on causal inference to learn representations robust to exposure bias.
\textbf{MACR}~\citep{macr} employs counterfactual reasoning based on a causal graph to mitigate popularity bias.
\textbf{SDRO}~\citep{sdro} uses distributionally robust optimization to improve worst-case performance.

\textbf{Implementation} GNNs use 3 message passing layers and an embedding dimension of 64. 
We use the Adam optimizer with learning rate 0.001, weight decay 1e-4, batch size 4096. Training runs for up to 500 epochs with early stopping patience 50. 
Further details can be found in Appendix~\ref{appendix:implement}.

\subsection{Recommendation Performance (RQ1)}
Table \ref{tab:tail_perf} presents the recommendation performance specifically for tail items, while Figure~\ref{fig:overall} shows the overall performance across all items on Gowalla dataset. More empirical results of overall recommendation performance can be found in Appendix~\ref{appendix:performance}. 

\begin{table*}[ht]
\vspace{-5pt}

\caption{Performance on tail items (20\% least popular items). Best results are in \textbf{bold}, and second-best results are \underline{underlined}. \%Improve is \model's relative gain over the second-best method. * indicates the statistically significant improvements (\ie two-sided t-test with p<0.05) over the best baseline.}
\vspace{-8pt}

\label{tab:tail_perf}
\vskip 0.15in
\centering
\resizebox{0.8\textwidth}{!}{%
\begin{tabular}{lccccccccccc}
\toprule
& \multicolumn{2}{c}{Adressa} & \multicolumn{2}{c}{Gowalla} & \multicolumn{2}{c}{Yelp2018} & \multicolumn{2}{c}{ML10M} & \multicolumn{2}{c}{Globo} \\
\cmidrule(lr){2-3} \cmidrule(lr){4-5} \cmidrule(lr){6-7} \cmidrule(lr){8-9} \cmidrule(lr){10-11}
Method & R@20 & N@20 & R@20 & N@20 & R@20 & N@20 & R@20 & N@20 & R@20 & N@20 \\
\midrule
LightGCN & 0.015 & 0.007 & 0.006 & 0.006 & 0.002 & 0.001 & 0.002 & 0.001 & 0.001 & 0.002 \\
+ IPS    & \underline{0.040} & \underline{0.020} & \underline{0.030} & \textbf{0.025} & 0.006 & \underline{0.007} & \underline{0.011} & \underline{0.005} & \underline{0.022} & \underline{0.012} \\
+ CausE  & 0.023 & 0.011 & 0.020 & 0.013 & \underline{0.007} & {0.004} & {0.007} & 0.002 & 0.014 & {0.008} \\
+ MACR   & 0.016 & 0.007 & 0.014 & 0.008 & 0.005 & 0.004 & 0.007 & 0.002 & 0.006 & 0.005 \\
+ SDRO   & 0.014 & 0.007 & 0.012 & 0.007 & 0.005 & 0.003 & {0.007} & 0.002 & 0.003 & 0.004 \\
\rowcolor{gray!30}+ \model (Ours) & \textbf{0.055} & \textbf{0.024} & \textbf{0.038} & \underline{0.021} & \textbf{0.009} & \textbf{0.008} & \textbf{0.013} & \textbf{0.008} & \textbf{0.024} & \textbf{0.013} \\
\rowcolor{gray!20}\textit{\%Improve} & \textit{37.50}* & \textit{20.00}* & \textit{26.67}* & - & \textit{28.57}* & \textit{14.29}* & \textit{18.18}* & \textit{60.00}* & \textit{9.09}* & \textit{8.33}* \\
\midrule
SimGCL   & 0.016 & 0.007 & 0.008 & 0.007 & 0.002 & 0.003 & 0.003 & 0.002 & 0.002 & 0.001 \\
+ IPS    & \underline{0.052} & \underline{0.027} & 0.021 & \underline{0.018} & \underline{0.008} & \underline{0.006} & 0.012 & {0.008} & \textbf{0.025} & \underline{0.011} \\
+ CausE  & 0.018 & 0.008 & \underline{0.028} & \underline{0.018} & 0.006 & 0.005 & 0.010 & 0.004 & 0.010 &{0.008}\\
+ MACR   & 0.014 & 0.006 & 0.017 & 0.006 & 0.005 & 0.004 & \underline{0.013} & \underline{0.009} & \underline{0.012} & {0.008} \\
+ SDRO   & 0.016 & 0.007 & 0.014 & 0.007 & 0.006 & 0.003 & 0.008 & 0.003 & 0.007 & 0.006 \\
\rowcolor{gray!30}+ \model (Ours) & \textbf{0.085} & \textbf{0.046} & \textbf{0.055} & \textbf{0.047} & \textbf{0.015} & \textbf{0.014} & \textbf{0.016} & \textbf{0.012} & \textbf{0.025} & \textbf{0.017} \\
\rowcolor{gray!20}\textit{\%Improve} & \textit{63.46}* & \textit{70.37}* & \textit{96.43}* & \textit{161.11}* & \textit{87.50}* & \textit{133.33}* & \textit{23.08}* & \textit{23.33}* & - & \textit{54.55}* \\
\midrule
LightGCL & 0.020 & 0.009 & 0.010 & 0.008 & 0.003 & 0.003 & 0.003 & 0.002 & 0.004 & 0.004 \\
+ IPS    & \underline{0.045} & \underline{0.022} & \underline{0.040} & \underline{0.030} & 0.010 & 0.010 & 0.010 & \underline{0.006} & \underline{0.024} & \underline{0.016} \\
+ CausE  & 0.022 & {0.008} & 0.026 & 0.019 & 0.002 & 0.002 & 0.003 & 0.001 & 0.015 & 0.011 \\
+ MACR   & 0.008 & 0.005 & 0.015 & 0.012 & \underline{0.011} & \underline{0.012} & \underline{0.011} & 0.007 & 0.013 & 0.007 \\
+ SDRO   & 0.015 & 0.007 & 0.013 & 0.007 & 0.010 & 0.003 & 0.007 & 0.002 & 0.007 & 0.005 \\
\rowcolor{gray!30}+ \model (Ours) & \textbf{0.075} & \textbf{0.047} & \textbf{0.050} & \textbf{0.032} & \textbf{0.020} & \textbf{0.019} & \textbf{0.014} & \textbf{0.010} & \textbf{0.032} &\textbf{0.018} \\
\rowcolor{gray!20}\textit{\%Improve} & \textit{66.67}* & \textit{113.64}* & \textit{25.00}* &\textit{6.67}* & \textit{81.82}* & \textit{58.33}* & \textit{27.27}* & \textit{66.67}* & \textit{133.33}* & \textit{12.50}* \\
\bottomrule
\end{tabular}%
}
\vspace{-5pt}
\end{table*}

\begin{figure}[h]
    \centering
    \includegraphics[width=0.8\linewidth]{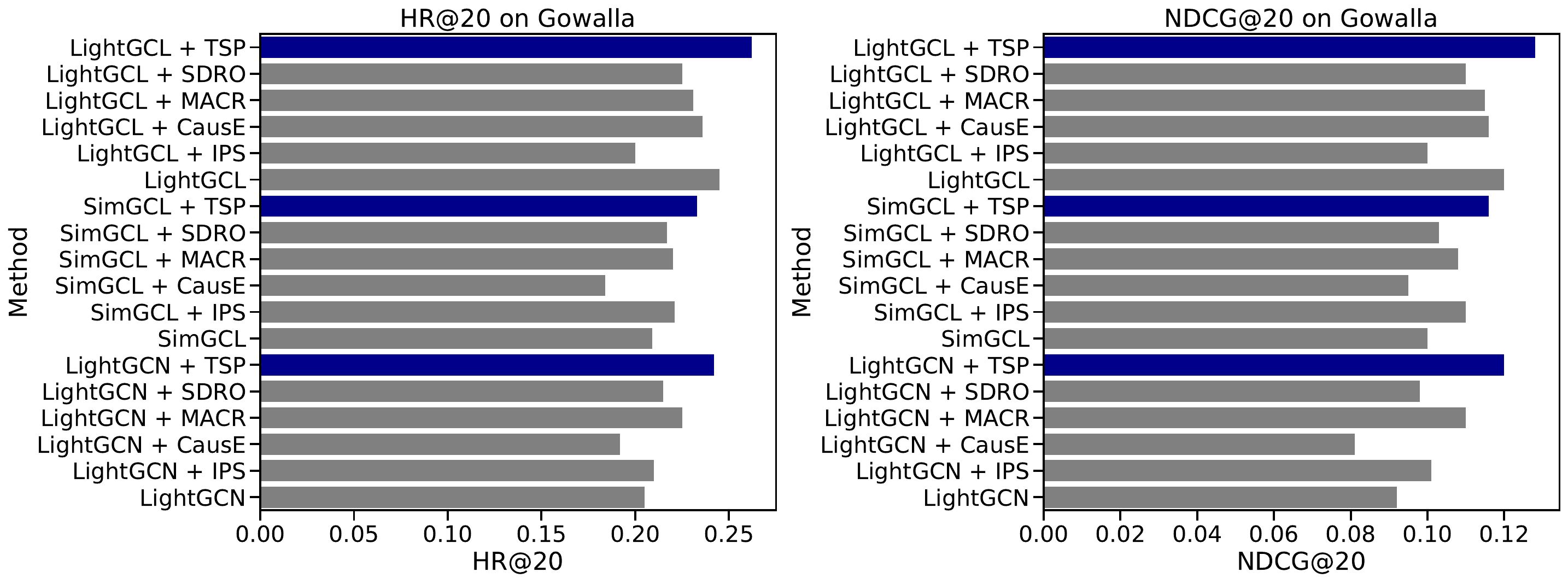}
    \caption{The overall performance of TSP and other baselines on Gowalla dataset.}
    \vspace{-6pt}
    \label{fig:overall}
\end{figure}

We have following key observations:
    (1) \model consistently achieves substantial improvements in Recall@20 and NDCG@20 for tail items across all datasets and backbone models (Table \ref{tab:tail_perf}). It frequently outperforms the best baseline debiasing method by a significant margin (\%Improve). This validates the strong ability of \model to improve tail performance.
    (2) While the primary goal is mitigating topology bias, \model often maintains or even improves overall performance compared to the backbone (Figure \ref{fig:overall}, Appendix~\ref{appendix:performance}). In many cases, it also outperforms other debiasing methods on overall metrics, suggesting \model can effectively balance representations without sacrificing overall recommendation quality.
    (3) The improvements offered by \model are particularly pronounced when applied to contrastive learning backbones (SimGCL, LightGCL) compared to LightGCN. This might indicate that contrastive methods can benefit more from higher-order structures.

\subsection{Distribution of Item Embeddings (RQ2)}
To explore the impact of our proposed method on the distribution of learned embeddings, we project the item embeddings learned by LightGCN and LightGCL on the Gowalla dataset into a two-dimensional space using t-SNE visualization. We compare these results with competitive MACR baselines, as shown in Figure~\ref{fig:tsne}. Our key observations are as follows:
(1) LightGCL exhibits a more balanced distribution of item embeddings compared to LightGCN, aligning with the expectation that contrastive learning methods can learn more uniform representations~\cite{directau}. 
(2) The incorporation of simplicial propagation in our method results in a more uniform distribution of item embeddings, validating the effectiveness of simplicial message passing.



\begin{figure}[ht]
\begin{minipage}[t]{0.48\textwidth}

\centering
\captionof{table}{Ablation study on Gowalla dataset. We report the overall and 20\% tail item results of Top 20 performance.}
\label{tab:ablation}

\resizebox{0.8\linewidth}{!}{\begin{tabular}{lcccc}
\toprule
& \multicolumn{2}{c}{Overall} & \multicolumn{2}{c}{Tail} \\
\cmidrule(lr){2-3} \cmidrule(lr){4-5}
Variant & R@20  & N@20 & R@20 & N@20 \\
\midrule
TSP-Full  & 0.082  & 0.058 & 0.038  & 0.021 \\
TSP-SE    & 0.056 & 0.041 & 0.017 &0.009 \\
LGN     & 0.047 & 0.038 & 0.006 & 0.006 \\
\bottomrule
\end{tabular}}
\end{minipage}
 \hfill
 \begin{minipage}[t]{0.48\textwidth}
 \centering
 \captionof{table}{TSP preprocessing and inference time costs compared with LightGCN backbone training/inference time. }
 \label{tab:scalability}
 \resizebox{\linewidth}{!}{
\begin{tabular}{lrrrrr}
\toprule
Time Cost & Adressa & Gowalla & Yelp18 & ML10M & Globo \\
\midrule
LGN Train & 10min & 35min & 1h40m & 9h & 3h \\
TSP Preproc & 2.54s & 15.97s & 26.32s & 91.34s & 44.29s \\
\midrule
LGN Infer & 0.037s & 0.080s & 0.134s & 0.575s & 0.501s \\
TSP Infer & 0.098s & 0.130s & 0.200s & 0.650s & 0.606s \\
\bottomrule
\end{tabular}}

 \end{minipage}
\end{figure}
\vspace{-10pt}
\subsection{Ablation Study~(RQ3)}
\label{sec:ablation}
To isolate the contribution of simplicial message passing and semantic graph construction to the final performance gains, we conduct an ablation study on Gowalla dataset. The results are in Table~\ref{tab:ablation}, with more results in Appendix~\ref{appendix:ablation}. Specifically, we compare following variants of \model with same LightGCN backbone: (1) \textbf{TSP-Full}, the original model, (2) \textbf{TSP-SE}, which uses the same semantic graph as TSP, but only performs graph message passing, (3) LGN, the LightGCN baseline. The results show that while TSP-SE does provide some improvement, suggesting that denoising the graph structure can improve model performance. The full \model framework, incorporating message passing on the simplicial complex structure, yields significantly larger gains. This confirms that explicitly modeling and propagating information through higher-order structures is crucial for effectively smoothing representations and achieving fairer recommendations.


\begin{figure*}[t]
\vskip 0.2in
\centering
\includegraphics[width=\textwidth]{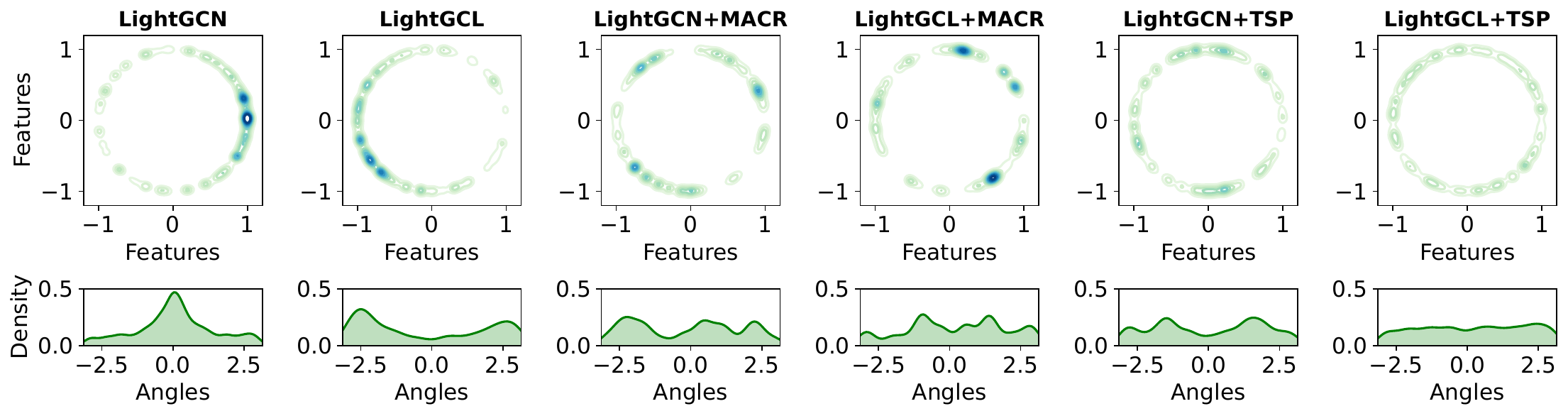}
\caption{t-SNE visualization of item embeddings learned on Gowalla dataset.
  The top figures plot the Gaussian KDE of embeddings projected to $\mathbb{R}^2$. The darker the color, the more items are in this region.
  The figures below show the KDE of angles (i.e. $arctan2(y, x)$ for each point $(x, y)$).}
  \vspace{-10pt}
\label{fig:tsne}
\end{figure*}

\subsection{Scalability (RQ4)}


To assess practical viability (RQ4), we measure the preprocessing time (building the complex and Laplacian) and the additional inference time introduced by \model compared to the LightGCN backbone. Table \ref{tab:scalability} shows these times across datasets. The lightweight one-time preprocessing and efficient inference compared to the backbone LightGCN suggests that our approach introduces minimal additional computational overhead. Notably, as the dataset size increases, the relative computational overhead introduced by TSP diminishes in comparison to the backbone model.

\section{Related Works}
\paragraph{Graph Neural Networks for Recommendation}
Graph Neural Networks (GNNs)~\cite{gcn,gat} are widely used for modeling high-order interactions in recommendation tasks. Early works like NGCF~\cite{ngcf} utilize graph convolutions to iteratively aggregate collaborative signals. 
Recent advances focus on improving GNN architectures to enhance efficiency and accuracy, such as LightGCN~\cite{lgn}, UltraGCN~\cite{ultragcn}, and SVD-GCN~\cite{svdgcn}. 
The integration of contrastive learning~\cite{ssl,simgcl,ssgr,PCRec} has further improved representation robustness, with models like SimGCL~\cite{simgcl} demonstrating the benefits of self-supervised learning objectives and lightweight graph augmentations.
Emerging research also explores the utilization of large language models (LLMs) enhancing graph recommenders~\cite{mm,llmrec,llmrg}, expanding the applicability of graph-based methods.
\vspace{-4pt}
\paragraph{Higher-Order and Simplicial Neural Networks}
While GNNs effectively capture pairwise relationships, their expressive power is limited by the 1-Weisfeiler-Lehman (1-WL) test~\cite{gnnpower}, impeding their ability to model higher-order structures such as triangles, cliques, and communities~\cite{substruct}. To overcome this, recent studies employ simplicial complexes and higher-order message passing. 
Models like Simplicial Neural Networks (SNN)~\cite{snn}, BSCNet~\cite{bscnets}, and HSN~\cite{hsn} extend convolutions beyond simple edges, leveraging Hodge Laplacians for richer structural representations. Simplicial attention mechanisms~\cite{san1,san2,san3} are also developed to model interactions among higher-order simplices. 
\paragraph{Popularity Bias Mitigation}
Popularity bias remains a critical challenge for fair recommendation. Earlier approaches utilize regularization~\cite{reg4,debias1} or adversarial frameworks~\cite{adv1,adv2,adv3} to reduce popularity-driven ranking bias.
Causal modeling has emerged as a powerful tool to disentangle popularity effects~\cite{macr,casual1}, with counterfactual reasoning being applied to mitigate bias at the inference stage. Newer perspectives focus on employing topological interventions or stochastic graph augmentations~\cite{topo1,topo2}. Recent research also investigates debiasing via distributionally robust optimization~\cite{sdro}, embedding landscape modeling~\cite{reg1, emb1}, and reinforcement learning frameworks~\cite{rl1,rl2} to achieve a balance between recommendation quality and fairness.
\section{Conclusion}
This paper addresses topology bias in graph-based recommender systems by analyzing message passing through a Dirichlet energy perspective. Based on our theoretical analysis, we propose Test-time Simplicial Propagation (\model) to address this bias by using higher-order simplicial complex structures for more balanced message passing. Extensive experiments demonstrate that \model effectively reduces the topology bias and improves recommendation quality.
\bibliographystyle{plainnat}
\bibliography{reference.bib}

\newpage
\appendix
\part*{Appendix} 
\section{Proofs}
\label{appendix:proof}
\subsection{Proof of Lemma~\ref{lemma:equiv}}
\label{appendix:proof1}
\lemmaequiv*
\begin{proof}
    The gradient of the objective function in Equation~\eqref{eq:energy_op} with respect to $\mathbf{X}$ is:
    \begin{equation}
        \nabla_{\mathbf{X}}\mathcal{J}(\mathbf{X})=\tilde{\mathbf{L}}\mathbf{X}+\mu(\mathbf{X}-\mathbf{X}^{(0)})
        \label{eq:energy_grad}
    \end{equation}
    
    With a learning rate of $\eta$, one step of gradient descent yields:
    \begin{align}
        \mathbf{X}^{(1)} &= \mathbf{X}^{(0)}-\eta\nabla_{\mathbf{X}}\mathcal{J}(\mathbf{X}^{(0)}) \\
        &= \mathbf{X}^{(0)}-\eta[\tilde{\mathbf{L}}\mathbf{X}^{(0)}+\mu(\mathbf{X}^{(0)}-\mathbf{X}^{(0)})] \\
        &= \mathbf{X}^{(0)}-\eta\tilde{\mathbf{L}}\mathbf{X}^{(0)} \\
        &= (\mathbf{I}-\eta \tilde{\mathbf{L}})\mathbf{X}^{(0)}
        \label{eq:gd}
    \end{align}
    
    When $\eta=1$, Equation~\eqref{eq:gd} becomes $\mathbf{X}^{(1)} = (\mathbf{I}- \tilde{\mathbf{L}})\mathbf{X}^{(0)} = \tilde{\mathbf{A}}\mathbf{X}^{(0)}$, which is exactly the message passing equation described in Equation~\eqref{eq:lightgcn}. For $\eta<1$, the operation corresponds to a low-pass filtering on graph signals, consistent with prior work \citep{lowpass}.
    
    Alternatively, we can derive the closed-form solution by setting the gradient in Equation~\eqref{eq:energy_grad} to zero:
    \begin{align}
        \tilde{\mathbf{L}}\mathbf{X}^*+\mu(\mathbf{X}^*-\mathbf{X}^{(0)}) &= 0\\
        \Rightarrow (\tilde{\mathbf{L}}+\mu\mathbf{I})\mathbf{X}^* &= \mu\mathbf{X}^{(0)}\\
        \Rightarrow \mathbf{X}^* &= \mu(\tilde{\mathbf{L}}+\mu\mathbf{I})^{-1}\mathbf{X}^{(0)}
        \label{eq:resolve}
    \end{align}
    
    Let $\tilde{\mathbf{L}} = \mathbf{U}\mathbf{\Lambda}\mathbf{U}^T$ be the eigendecomposition of $\tilde{\mathbf{L}}$, where $\mathbf{\Lambda}=\mathrm{diag}(\lambda_1, \lambda_2,\ldots,\lambda_N)$ is the diagonal matrix of eigenvalues. Then:
    \begin{align}
        \mathbf{X}^{*} &= \mu(\mathbf{U}\mathbf{\Lambda}\mathbf{U}^T + \mu\mathbf{I})^{-1}\mathbf{X}^{(0)}\\
        &= \mu \mathbf{U} (\mathbf{\Lambda}+\mu \mathbf{I})^{-1} \mathbf{U}^T \mathbf{X}^{(0)} \\
        &= \mathbf{U}\tilde{\mathbf{\Lambda}} \mathbf{U}^T \mathbf{X}^{(0)}
    \end{align}
    
    where
    \begin{equation}
        \tilde{\mathbf{\Lambda}} = \mu (\mathbf{\Lambda}+\mu \mathbf{I})^{-1} = \mathrm{diag}\left(\frac{\mu}{{\lambda}_k+\mu}\right) \quad \text{for}\ 1\leq k \leq N
    \end{equation}
    
    The embedding for node $v$ can then be expressed as:
    \begin{equation}
        \mathbf{x}^{*}_{v} = \sum_{k=1}^{N} \left(\frac{\mu}{{\lambda}_k+\mu}\right) \mathbf{u}_k[v] \mathbf{c}_k,
        \label{eq:node_eigen}
    \end{equation} 
    where $\mathbf{c}_k = \mathbf{u}_k^T \mathbf{X}^{(0)}$ and $\mathbf{u}_k[v]$ is the $v$-th element of eigenvector $\mathbf{u}_k$. The spectral filter $\frac{\mu}{{\lambda}_k+\mu}$ functions as a low-pass filter on graph signals, attenuating components corresponding to larger eigenvalues.
\end{proof}

\subsection{Proof of Corollary~\ref{col:norm}}
\label{appendix:proof2}
\coro*
\begin{proof}
    For the node embedding described in Equation~\ref{eq:node_eigen}, we can obtain a lower bound of norm by only considering the smallest eigenvalue ${\lambda}_1=0$ corresponding to eigenvector $\mathbf{u}_1 = (\sqrt{d_1}, \sqrt{d_2},\ldots,\sqrt{d_N})$, where $d_i$ is the degree of node $i$.
    \begin{align}
        \norm{\mathbf{x}_v^*}_2^2 &= \norm{\sum_{k=1}^{N} (\mu/(\tilde{\lambda}_k+\mu)) \mathbf{u}_k[v] \mathbf{c}_k^T}_2^2\\
        &\geq \norm{\mu/(\tilde{\lambda}_1+\mu)) \mathbf{u}_1[v] \mathbf{c}_k^T}_2^2\\
        &=\norm{\mathbf{u}_k[v]}_2^2\cdot C(\tilde{\mathbf{L}},\mathbf{X}^{(0)}) = d_v\cdot C(\tilde{\mathbf{L}},\mathbf{X}^{(0)})
    \end{align}
\end{proof}

\subsection{Proof of Lemma~\ref{lemma:update}}
\label{appendix:proof3}
\lemmaupdate*
\begin{proof}
    We can write Equation~\ref{eq:lightgcn} as:
    \begin{align}
        \Delta \mathbf{x}_v=\mathbf{x}_v^{(l)}-\mathbf{x}_v^{(l-1)} &= \sum_{j\in \mathcal{N}(v)} \frac{1}{\sqrt{d_v d_j}} \mathbf{x}_j^{(l-1)} - \mathbf{x}_v^{(l-1)}\\
        &= \frac{1}{\sqrt{d_v}} \sum_{j\in \mathcal{N}(v)}\left(\frac{1}{\sqrt{d_j}}\mathbf{x}_j^{(l-1)}-\frac{1}{\sqrt{d_v}}\mathbf{x}_v^{(l-1)}\right).\\
    \end{align}
   Then we have:
    \begin{align}
        \norm{\Delta \mathbf{x}_v}_2^2  &= \frac{1}{{d_v}}\norm{\sum_{j\in \mathcal{N}_k(v)}\left(\frac{1}{\sqrt{d_j}}\mathbf{x}_j^{(l-1)}-\frac{1}{\sqrt{d_v}}\mathbf{x}_v^{(l-1)}\right)}_2^2 \\
        &\le \frac{1}{{d_v}} \cdot d_v \sum_{j\in \mathcal{N}_k(v)} \norm{\frac{1}{\sqrt{d_j}}\mathbf{x}_j^{(l-1)}-\frac{1}{\sqrt{d_v}}\mathbf{x}_v^{(l-1)}}_2^2\quad \text{(Cauchy–Schwarz Inequality)}\\
        &= E(v)
    \end{align}
\end{proof}

\subsection{Proof of Lemma~\ref{lemma:simp_bias}}
\label{appendix:proof4}
\lemmasimp*
\begin{proof}
\begin{equation}
     \norm{\mathbf{R}_k\mathbf{v}}_2^2=\left(\sum_{i=1}^N \frac{\mu}{\lambda_i+\mu}\mathbf{u}_i\mathbf{u}_i^T\mathbf{v}\right)^2\le \left( \frac{\mu}{\lambda_{r+1}+\mu}\right)^2\sum_{i=1}^N\underbrace{\norm{\mathbf{u_i}}_2^2}_{1}(\mathbf{u}_i^T\mathbf{v})^2
\end{equation}
And since $\mathbf{u}_i$ span the kernel of $\mathbf{B}_{k+1}$, we have \begin{equation}
    \sum_{i=1}^N(\mathbf{u}_i^T\mathbf{v})^2=\norm{\mathbf{v}}_2^2.
\end{equation}
So we get
\begin{equation}
    \norm{\mathbf{R}_k\mathbf{v}}_2^2\leq  \left( \frac{\mu}{\lambda_{r+1}+\mu}\right)^2 \norm{\mathbf{v}}_2^2 < \norm{\mathbf{v}}_2^2.
\end{equation}
\end{proof}

\section{Evaluation Protocols}
\label{appendix:evaluate}
\subsection{Unbiased Sampling}
To fairly evaluate debiasing performance, standard random train-test splits are inadequate as they inherit the original data's popularity bias. We adopt the unbiased evaluation protocol from~\citet{macr}. Specifically, we uniformly sample 10\% of interactions for the test set and 10\% for validation, ensuring all items have an equal chance of being sampled, regardless of their popularity. The remaining 80\% form the training set. This creates unbiased test and validation sets with balanced popularity distributions, allowing for accurate assessment of long-tail performance.
\subsection{Evaluation Metrics}
\paragraph{Recall}
Recall@k is the proportion of relevant items found in the top-k recommendations. A higher Recall@k value indicates that the system is better at identifying and ranking relevant items for users. Formally, Recall@k is defined as:
\begin{equation}
    \text{Recall@k}=\frac{\text{Number of Relevant Items in Top-k}}{\text{Total Number of Relevant Items}}.
\end{equation}

\paragraph{Normalized Discounted Cumulative Gain (NDCG)~\cite{ndcg}}
NDCG@k is a widely used evaluation metric in recommender systems to measure the ranking quality of suggested items. NDCG@k takes into account both the relevance of recommendations and their positions in the ranked list, assigning higher importance to relevant items appearing earlier. To compute NDCG@k, first calculate the Discounted Cumulative Gain (DCG) up to position k, then normalize it by the ideal DCG (IDCG) obtained from the perfect ranking. Formally, for a list of recommendations $r_1,r_2,\ldots,r_k$, the DCG@k is defined as:
\begin{equation}
    \text{DCG@k}=\sum_{i=1}^k\frac{2^{rel_i}-1}{log_2(i+1)},
\end{equation}
where $rel_i$ is the relevance score of each item at position $i$. Then NDCG@k is the computed as:
\begin{equation}
    \text{NDCG@k}=\frac{\text{DCG@k}}{\text{IDCG@k}},
\end{equation}
where IDCG@k is the maximum possible DCG@k for the ideal ordering of items.

\section{More Experimental Results}
\label{appendix:performance}
Here we present overall recommendation results on datasets Adressa, Yelp2018, ML10M and Globo extending results in Figure~\ref{fig:overall}.

\begin{figure}[H]
    \centering
    \includegraphics[width=\linewidth]{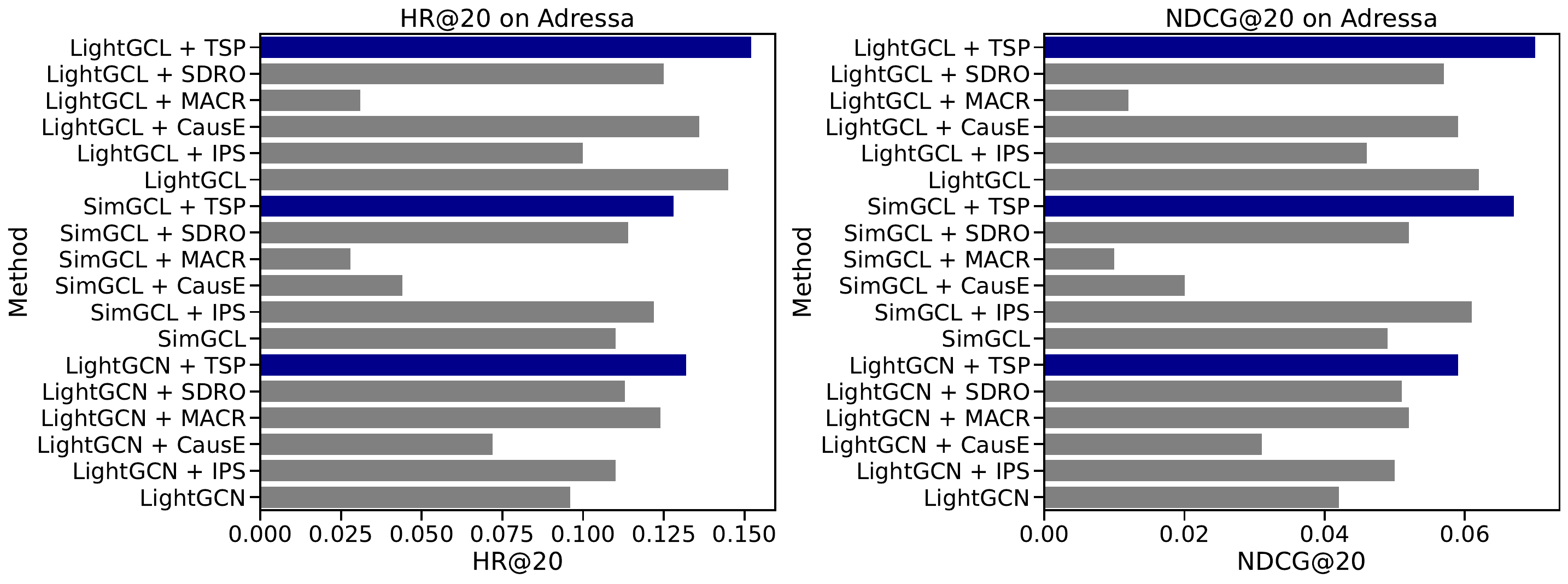}
    \caption{The overall performance of TSP and other baselines on Adressa dataset.}
\end{figure}

\begin{figure}[H]
    \centering
    \includegraphics[width=\linewidth]{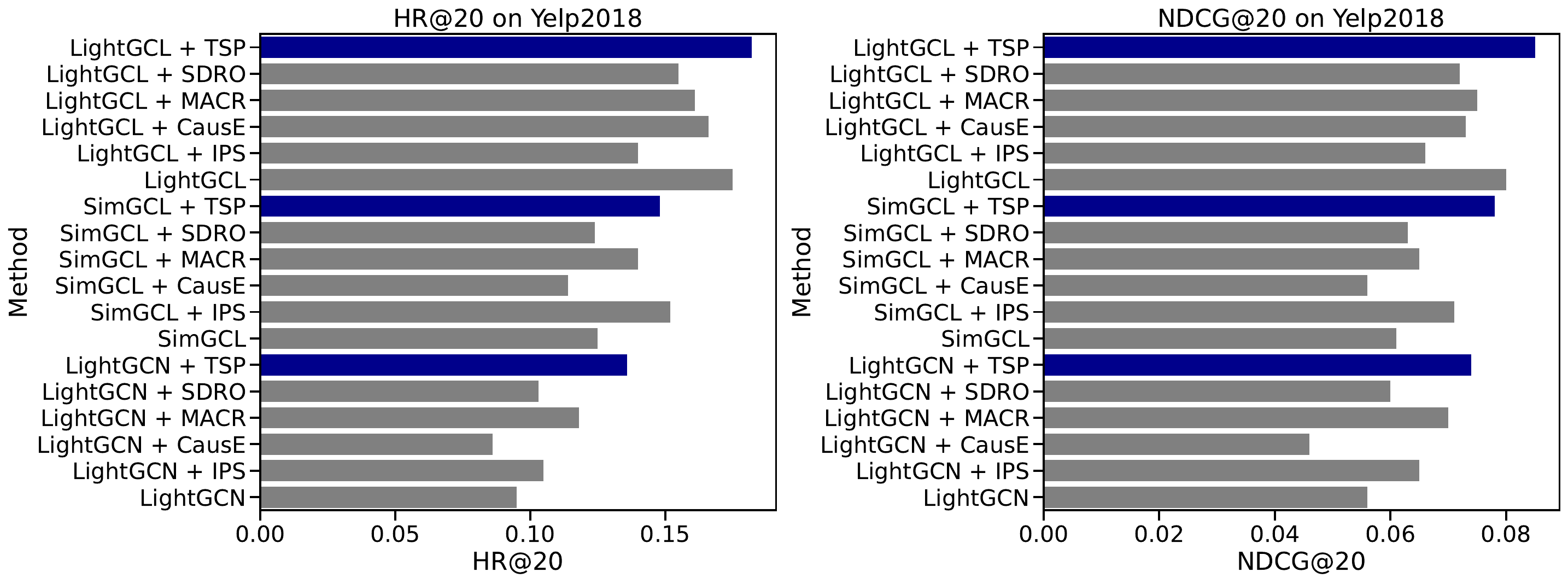}
    \caption{The overall performance of TSP and other baselines on Yelp2018 dataset.}
\end{figure}

\begin{figure}[H]
    \centering
    \includegraphics[width=\linewidth]{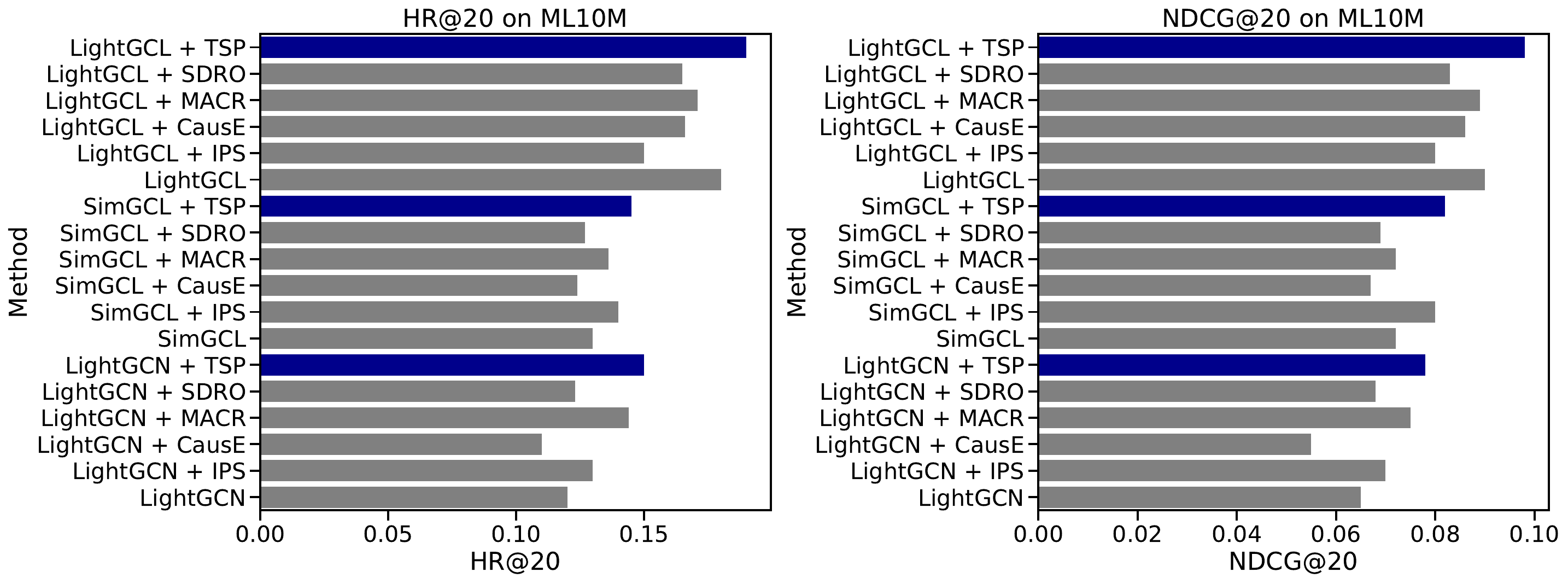}
    \caption{The overall performance of TSP and other baselines on ML10M dataset.}
\end{figure}

\begin{figure}[H]
    \centering
    \includegraphics[width=\linewidth]{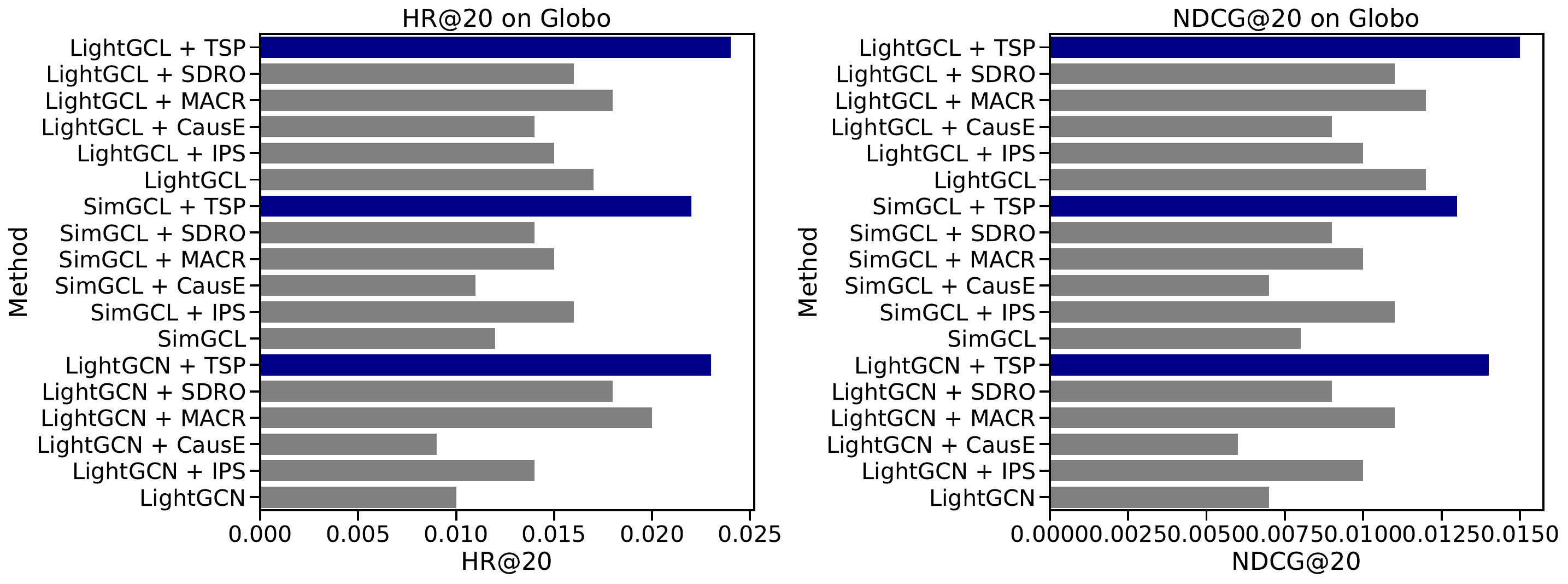}
    \caption{The overall performance of TSP and other baselines on Globo dataset.}
\end{figure}

\section{Examples of Simplicial Complexes}
In a simplicial complex, the boundary of a $k$-simplex maps to its $(k-1)$-simplices (\ie the faces of the simplex). For a $k$-simplex $\sigma = \{v_0, v_1, \ldots, v_k\}$, the boundary $\partial \sigma$ is given by:
\begin{equation}
\label{rule}    
\partial \sigma = \sum_{i=0}^{k} (-1)^i \{v_0, v_1, \ldots, \hat{v_i}, \ldots, v_k\},
\end{equation}
where $\hat{v_i}$ indicates that vertex $v_i$ is omitted.
This rule also determines the orientation of simplices in a SC boundary matrix as $-1$ or $1$, depending on the omitted $i$ of lower simplices.

\label{appendix:scexample}
\begin{figure}[ht]
    \centering
    \includegraphics[width=\linewidth]{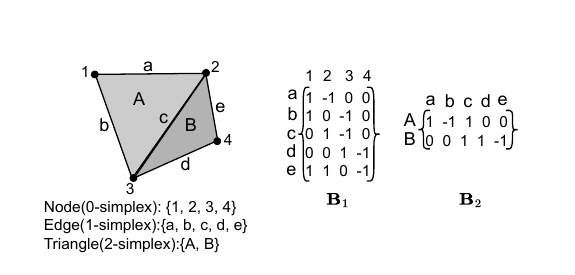}
    \caption{An example of simplices and boundary matrices.}
    \label{fig:boundary_example}
\end{figure}
We present an example simplicial complex consisting of 4 vertices in Figure~\ref{fig:boundary_example}. For triangle $A=\{1, 2, 3\}$, the edge $b = \{1, 3\}$ removes the $v_1$ in Equation~\ref{rule}, so its orientation~(sign) in boundary matrix $\mathbf{B_2}$ is $-1$.

\section{TSP Algorithm}
\label{appendix:algo}

\begin{algorithm}[H]
\small
\caption{Test-time Simplicial Propagation}
\label{alg:sp}
\KwIn{Pretrained node embedding $\mathbf{X}_0$ \\
Boundary matrices $\mathbf{B}_k$ \\
Hodge Laplacians ${\mathbf{L}_k}$ \\
Number of propagation layers $L$\\
Simplicial filter coefficient $\beta$\\
Maximum SC order $K$ }
\KwOut{final node embeddings $\mathbf{X}_{rec}$}
\For{$k=1$ to $K$}{ \tcc*[f]{Intra-Simplex Smoothing}\\
    $\mathbf{X}_{\sigma^k}^{(0)} \gets (\prod_{i=1}^k \mathbf{B}_i^T)\,\mathbf{X}_0$ 
}
\For{$k=0$ to $K$}{ \tcc*[f]{Inter-Simplex Propagation}\\
    \For{$l=1$ to $L$}{
        $\mathbf{X}_{\sigma^k}^{(l+1)} \gets (\mathbf{I}-\beta{\mathbf{L}_k})\,\mathbf{X}_{\sigma^k}^{(l)}$ 
    }
}

\For{$k=1$ to $K$}{\tcc*[f]{Multi-order Fusion}\\
    $\mathbf{X}_{0\leftarrow k} \gets (\prod_{i=1}^k \mathbf{B}_i) \,\mathbf{X}_{\sigma^k}^{(L)}$ 
}
$\mathbf{X}_{rec} \gets \mathbf{X}_0+\text{Mean}(\mathbf{X}_{0\leftarrow 1}, \ldots, \mathbf{X}_{0\leftarrow K})$ \\
\Return $\mathbf{X}_{rec}$
\end{algorithm}

\section{Datasets}
\label{appendix:datasets}
Five real-world recommendation datasets are used to evaluate the performance of \model and other baselines with statistics in Table~\ref{tab:datasets}. The details of these datasets are as follows:
\begin{itemize}[leftmargin=*]
    \item \textbf{Adressa}\footnote{\url{https://reclab.idi.ntnu.no/dataset}} is a news dataset that includes news articles (in Norwegian) in connection with anonymized users.
    \item \textbf{Gowalla}\footnote{\url{https://snap.stanford.edu/data/loc-gowalla.html}} is a location-based social networking website where users share their locations by checking-in. The friendship network is undirected and was collected using their public API, and consists of 196,591 nodes and 950,327 edges.
    \item \textbf{Yelp2018}\footnote{\url{https://www.yelp.com/dataset}} is adopted from the 2018 edition of the yelp challenge, consisting of user interactions with local businesses like restaurants and bars. The 10-core setting~(\ie only users and items with at least 10 interaction records are included) is adopted in order to ensure data quality.
    \item \textbf{Globo}\footnote{\url{https://www.kaggle.com/gspmoreira/news-portal-user-interactions-by-globocom}} consists of users interactions logs (page views) from a news portal provided by Globo.com, the most popular news portal in Brazil. This dataset also includes rich user contextual attributes such as click\_os, click\_country, click\_region and click\_referrer\_type.
    \item \textbf{ML10M}\footnote{\url{https://grouplens.org/datasets/movielens/10m}} is part of the movielens datasets with 10 million ratings by users. This dataset collects people’s expressed preferences for movies as 0-5 star ratings. In experiments we follow the common practice of converting ratings to binary interactions.
\end{itemize}

\begin{table}[H]
\caption{Statistics of the datasets. }
\label{tab:datasets}
\begin{center}
\begin{tabular}{lrrrr}
\toprule
Dataset & Users & Items & Interactions & Sparsity \\
\midrule
Adressa    & 13,485  & 744     & 116,321      & 0.011594 \\
Globo      & 158,323 & 12,005  & 2,520,171    & 0.001326 \\
ML10M      & 69,166  & 8,790   & 5,000,415    & 0.008225 \\
Yelp2018   & 31,668  & 38,048  & 1,561,406    & 0.001300 \\
Gowalla    & 29,858  & 40,981  & 1,027,370    & 0.000840 \\
\bottomrule
\end{tabular}
\end{center}
\end{table}

\section{Implementation Details}
\label{appendix:implement}

Here we list the hyperparameter search intervals we use on evaluation sets across different datasets in Table~\ref{tab:hyperparameter}. All experiments are conducted on a single Nvidia RTX A6000 Ada GPU with 48GB memory, with 3 repeated runs averaged for final results.

\begin{table}[H]
    \centering
        \caption{Search intervals of hyperparameters for \model and other baselines.}
    \label{tab:hyperparameter}
    \begin{tabular}{lp{10cm}}
    \toprule
         Models & Hyperparameters \\
         \midrule
         LightGCN& reg weight $\alpha$: [1e-4, 5e-4, 1e-3, 5e-3, 1e-2]  \\ 
         \midrule
         LightGCL& CL loss weight $\lambda$: [1e-5,1e-6,1e-7,1e-8], rank of SVD $q$: [2, 5, 8, 10]\\
         \midrule
         SimGCL & CL loss weight $\lambda$: [0.01, 0.05, 0.1, 0.2, 0.5, 1], noise magnitude $\epsilon$: [0, 0.01, 0.05, 0.1, 0.2, 0.5]\\
         \midrule
         IPS & -\\
         \midrule
         CausE & -\\
         \midrule
         MACR &  intermediate preference blocking $c$ :[10, 20, 30, 40, 50, 60], item loss weight $\alpha$ :[1e-5, 1e-4, 1e-3,1e-2], user loss weight $\beta$ :[1e-5, 1e-4, 1e-3,1e-2]\\
         \midrule
         SDRO & -\\
         \midrule
         TSP & maximum SC dimension $K$ :[3, 4, 5, 6], simplicial filter $\beta$ :[1e-3, 5e-3, 1e-2, 5e-2, 1e-1, 5e-1, 1], simplicial message passing layer $L$: [2, 3, 4]\\
         \bottomrule
    \end{tabular}

\end{table}

\section{Ablation Study}
\label{appendix:ablation}
We provide full performance of the variants in Section~\ref{sec:ablation} across datasets with LightGCN backbone in Table~\ref{tab:moreablation}.

\begin{table}[H]
    \caption{Ablation study across datasets. We report the overall and 20\% tail item results of Top 20 performance.}
    \label{tab:moreablation}
    \centering
\begin{tabular}{lccccc}
\toprule
& &\multicolumn{2}{c}{Overall} & \multicolumn{2}{c}{Tail} \\
\cmidrule(lr){3-4} \cmidrule(lr){5-6}
Dataset & Variant & R@20  & N@20 & R@20 & N@20 \\
\midrule
\multirow{3}{*}{Adressa}&TSP-Full  & 0.132 & 0.059 & 0.055  & 0.024 \\
&TSP-SE    & 0.108 & 0.046 & 0.024 &0.013 \\
&LGN     & 0.096 & 0.042 & 0.015 & 0.007 \\
\midrule
\multirow{3}{*}{Yelp2018}&TSP-Full  & 0.020  & 0.018 & 0.009  & 0.008 \\
&TSP-SE    & 0.010 & 0.006 & 0.006 &0.003 \\
&LGN     & 0.007 & 0.002 & 0.004 & 0.002 \\
\midrule
\multirow{3}{*}{Gowalla}&TSP-Full  & 0.082  & 0.058 & 0.038  & 0.021 \\
&TSP-SE    & 0.056 & 0.041 & 0.017 &0.009 \\
&LGN     & 0.047 & 0.038 & 0.006 & 0.006 \\
\midrule
\multirow{3}{*}{Globo}&TSP-Full  & 0.027  & 0.019 & 0.024  & 0.013 \\
&TSP-SE    & 0.016 & 0.012 & 0.007 &0.006 \\
&LGN     & 0.010 & 0.007 & 0.001 & 0.002 \\
\midrule
\multirow{3}{*}{ML10M}&TSP-Full  & 0.027  & 0.018 & 0.013  & 0.008 \\
&TSP-SE    & 0.015 & 0.011 & 0.006 &0.003 \\
&LGN     & 0.009 & 0.008 & 0.002 & 0.001 \\
\bottomrule
\end{tabular}
\end{table}

\section{Lifting Graphs into Simplicial Complex}
\label{appendix:liftsc}
The implementation of lifting graphs into (clique) simplicial complexes centers on finding every clique in the graph, where a clique is a subset of vertices with every pair connected by an edge. Efficient algorithms such as Bron–Kerbosch are often employed for this task. Each detected clique of $k$ vertices is then interpreted as a simplex of dimension $k-1$, so, for example, a triangle in the graph (a 3-clique) becomes a 2-dimensional simplex in the complex. To ensure the resulting structure is a simplicial complex, not only the maximal cliques but also all their subcliques should be included, since every subset of a clique is itself a clique and thus forms a lower-dimensional simplex. We simplify this complicated process with TopoNetX~\cite{toponetx}, a suite specially designed to deal with topological data.

\section{Broader Impact and Limitations}
\label{appendix:limitation}
In this paper we focus on addressing the topology bias in recommender systems, ensuring fairer recommendation results. This can directly contribute to the development of more equitable information access and distribution, promoting diversity and inclusivity in user experiences across various platforms. However, our approach is not without its limitations. Our current model solely concentrates on analyzing simplified graph message passing without incorporating weights or non-linear activations. While this simplification is often adopted to simplify the analysis on graph message passing, this may lead to incomplete understanding of how topology bias influences this process. We leave this for future work.

\newpage

\end{document}